\documentclass[11pt]{article}
\usepackage{fullpage}
\usepackage{graphicx}
\usepackage{amsmath}
\usepackage{amssymb}
\usepackage[boxed,lined]{algorithm2e}

\usepackage{versions}\excludeversion{INUTILE}

\newtheorem{theorem}{Theorem}
\newtheorem{lemma}[theorem]{Lemma}
\newtheorem{corollary}[theorem]{Corollary}
\newenvironment{proof}{\textit{Proof:}}{\hfill$\Box$ \paragraph{} }
\newtheorem{example}{Example}

\providecommand{\pii}{\ensuremath{\pi^{-1}}}  

\newcommand{\Oh}[1]
    {\ensuremath{\mathcal{O}\left( {#1} \right)}}
\newcommand{\occ}[2]
    {\ensuremath{|{#2}|_{#1}}}
\newcommand{\access}
    {\ensuremath{\mathsf{access}}}
\newcommand{\rank}
    {\ensuremath{\mathsf{rank}}}
\newcommand{\select}
    {\ensuremath{\mathsf{select}}}
\newcommand{\runs}
    {\ensuremath{\mathsf{runs}}}

\newcommand{\union}
    {\ensuremath{\mathsf{union}}}
\newcommand{\find}
    {\ensuremath{\mathsf{find}}}
\newcommand{\HH}{\mathcal{H}}
\newcommand{\Ho}{\HH_0}
\newcommand{\Hk}{\HH_k}

\newcommand{\libcds}{{\sc Libcds}}

\newcommand{\mFromRaman}{\ensuremath\mu} 

\newcommand{\mapping}{\ensuremath{{m}}}

\newcommand{\Cs}{\mathcal{C}}

\begin{document}

\title{Efficient Fully-Compressed Sequence Representations
\thanks{Funded in part by Fondecyt Project 1-110066, Chile.
An early version of this article appeared in {\em Proc. 21st Annual 
International Symposium on Algorithms and Computation (ISAAC)}, part II,
pp. 215--326, 2010.}
}

\author{J\'{e}r\'{e}my Barbay\thanks{Dept. of Computer Science, 
        University of Chile. {\tt \{jbarbay|gnavarro|yasha\}@dcc.uchile.cl}.}
\and
    Francisco Claude\thanks{David R. Cheriton School of Computer Science,
University of Waterloo, Canada. {\tt fclaude@cs.uwaterloo.ca}.}
\and
    Travis Gagie\thanks{Dept. of Computer Science, Aalto University, Finland. 
    {\tt travis.gagie@gmail.com}.}
\and
    Gonzalo Navarro$^\dag$
\and
    Yakov Nekrich$^\dag$
}

\date{}
\maketitle

\setcounter{footnote}{0}

\begin{abstract}
  We present a data structure that stores a sequence $s[1..n]$ over
  alphabet $[1..\sigma]$ in $n\Ho(s) + o(n)(\Ho(s){+}1)$ bits, where
  $\Ho(s)$ is the zero-order entropy of $s$.
  This structure supports the queries \access, \rank\ and \select,
  which are fundamental building blocks for many other compressed data 
  structures, in worst-case time $\Oh{\lg\lg\sigma}$
  and average time $\Oh{\lg \Ho(s)}$.
  The worst-case complexity matches the best previous results, yet these had
  been achieved with data structures using $n\Ho(s)+o(n\lg\sigma)$ bits. 
  On highly compressible sequences the $o(n\lg\sigma)$ bits of the redundancy 
  may be significant compared to the the $n\Ho(s)$ bits that encode the data. 
  Our representation, instead, compresses the redundancy as well. Moreover, our 
  average-case complexity is unprecedented.
  
  Our technique is based on partitioning the alphabet into characters of 
  similar frequency. The subsequence corresponding to each group can then be 
  encoded using fast uncompressed representations without harming the overall
  compression ratios, even in the redundancy.
  
  The result also improves upon the best current compressed
  representations of several other data structures.
  For example, we achieve
  $(i)$ compressed redundancy, retaining the best time complexities, for the 
  smallest existing full-text self-indexes; 
  $(ii)$ compressed permutations $\pi$ with times for $\pi()$ and 
  $\pii()$ improved to loglogarithmic; and 
  $(iii)$ the first compressed representation of dynamic collections of 
  disjoint sets.
  We also point out various applications to inverted indexes,
  suffix arrays, binary relations, and data compressors.

  Our structure is practical on large alphabets. Our experiments show that,
  as predicted by theory, it dominates the space/time tradeoff map
  of all the sequence representations, both in synthetic and application
  scenarios.
\end{abstract}

\section{Introduction} \label{sec:introduction}

A growing number of important applications require data
representations that are space-efficient and at the same time support
fast query operations.  In particular, suitable representations of
{\em sequences} supporting a small set of basic operations yield
space- and time-efficient implementations for many other data
structures such as full-text indexes \cite{GGV03,GMR06,FMMN07,NM07},
labeled trees \cite{BHMR07,BGMR07,FLMM09}, binary relations
\cite{BHMR07,BCN10}, permutations \cite{BN09} and two-dimensional
point sets \cite{MN07,BLNS09}, to name a few.

Let $s[1..n]$ be a sequence of characters belonging to alphabet
$[1..\sigma]$.  In this article we focus on the following set of
operations, which is sufficient for many applications:
\begin{description}
\item[\(s.\access (i)\)] returns the $i$th character of sequence $s$,
  which we denote \(s [i]\);
\item[\(s.\rank_a (i)\)] returns the number of occurrences of
  character $a$ up to position $i$ in $s$; and
\item[\(s.\select_a (i)\)] returns the position of the $i$th
  occurrence of $a$ in $s$.
\end{description}

Table~\ref{tab:previous} shows the best sequence representations and
the complexities they achieve for the three queries, where $\Hk(s)$
refers to the $k$-th order empirical entropy of $s$ \cite{Man01}. To
implement the operations efficiently, the representations require some
{\em redundancy} space on top of the $n\Ho(s)$ or $n\Hk(s)$ bits
needed to encode the data.
For example, multiary wavelet trees (row 1) represent $s$ within
zero-order entropy space plus just $o(n)$ bits of redundancy, and
support queries in time $\Oh{1+\frac{\lg\sigma}{\lg\lg n}}$. This is
very attractive for relatively small alphabets, and even constant-time
for polylog-sized ones. For large $\sigma$, however, all the other
representations in the table are exponentially faster, and some even
achieve high-order compression. However, their redundancy is higher,
$o(n\lg\sigma)$ bits. While this is still asymptotically negligible
compared to the size of a plain representation of $s$, on highly
compressible sequences such redundancy is not always negligible
compared to the space used to encode the compressed data. This raises
the challenge of retaining the efficient support for the queries {\em
  while compressing the index redundancy} as well.

In this paper we solve this challenge in the case of zero-order
entropy compression, that is, the redundancy of our data structure is
asymptotically negligible compared to the zero-order compressed text
size (not only compared to the plain text size), plus $o(n)$ bits. The
worst-case time our structure achieves is $\Oh{\lg\lg\sigma}$, which
matches the best previous results for large $\sigma$. Moreover, the
average time is {\em logarithmic on the entropy of the sequence},
$\Oh{\lg \Ho(s)}$, under reasonable assumptions on the query
distribution. This average time complexity is also unprecedented: the
only previous entropy-adaptive time complexities we are aware of come
from Huffman-shaped wavelet trees \cite{GGV03}, which have 
recently been shown capable of achieving $\Oh{1+\frac{\Ho(s)}{\lg\lg n}}$
query time with just $o(n)$ bits of redundancy \cite[Thm.~5]{BN11}.

\begin{table}[t]
  \caption{
    Best previous bounds and our new ones for data structures supporting
    \access, \rank\ and \select.
The space bound of the form $\Hk(s)$ holds for any \(k = o (\lg_\sigma n)\),
and those of the form $(1+\epsilon)$ hold for any constant $\epsilon>0$.
On average $\lg\sigma$ becomes $\Ho(s)$ in our time complexities (see 
Corollary~\ref{cor:partitioning-avg}) and in row 1 \cite[Thm.~5]{BN11}.
  }
\label{tab:previous}
\medskip
\resizebox{\textwidth}{!}
{\begin{tabular}
{c@{\hspace{1ex}}|@{\hspace{2ex}}c@{\hspace{3ex}}c@{\hspace{3ex}}c@{\hspace{3ex}}c@{\hspace{1ex}}c}
                           & space (bits)                                                & \access                               & \rank                                        & \select                                  \\
\hline \\[-1ex]
\cite[Thm.~4]{GRR08}              & \(n \Ho (s) + o(n)\)                              & $\Oh{1+\frac{\lg \sigma}{\lg \lg n}}$ & $\Oh{1+\frac{\lg \sigma}{\lg \lg n}}$        & $\Oh{1+\frac{\lg \sigma}{\lg \lg n}}$    \\[1ex]
\cite[Lem.~4.1]{BHMR07}  & \(n \Ho (s) + o(n\lg \sigma)\)            & $\Oh{\lg\lg\sigma}$    & $\Oh{\lg \lg \sigma}$ & $\Oh{1}$ \\[1ex]
\cite[Cor.~2]{GOR10}	 & \(n \Hk (s) + o(n\lg \sigma)\) & $\Oh{1}$ & $\Oh{\lg\lg\sigma}$ & $\Oh{\lg\lg\sigma}$ \\[1ex] 
\cite[Thm.~2.2]{GMR06} & \((1+\epsilon) n \lg\sigma\) & $\Oh{1}$ & $\Oh{\lg \lg \sigma}$                        & $\Oh{1}$ \\[1ex]
\hline \\[-1ex]
Thm.~\ref{thm:partitioning} & \(n \Ho(s) + o(n)(\Ho(s)+1)\)                               & $\Oh{\lg \lg \sigma}$                 & $\Oh{\lg \lg \sigma}$                        & $\Oh{1}$                                 \\[1ex]
Thm.~\ref{thm:partitioning} & \(n \Ho(s) + o(n)(\Ho(s)+1)\)                               & $\Oh{1}$                              & $\Oh{\lg \lg \sigma}$ & $\Oh{\lg \lg \sigma}$                    \\[1ex]
Cor.~\ref{cor:partitioning-constant} & \((1+\epsilon) n \Ho(s) + o(n)\) & $\Oh{1}$
& $\Oh{\lg \lg \sigma}$                        & $\Oh{1}$ \\[.5ex]
\end{tabular}}
\end{table}

Our technique is described in Section~\ref{sec:partitioning}.  It can
be summarized as partitioning the alphabet into sub-alphabets that
group characters of similar frequency in $s$, storing in a multiary
wavelet tree~\cite{FMMN07} the sequence of sub-alphabet identifiers,
and storing separate sequences for each sub-alphabet, containing the
subsequence of $s$ formed by the characters of that
sub-alphabet. Golynski et al.'s~\cite{GMR06} or Grossi et
al.'s~\cite{GOR10} structures are used for these subsequences,
depending on the tradeoff to be achieved. We show that it is sufficient to
achieve compression in the multiary wavelet tree, while benefitting
from fast operations on the representations of the
subsequences.

The idea of alphabet partitioning is not new. It has been used in
practical scenarios such as fax encoding, JPEG and MPEG formats
\cite{PM92,HPN97}, and in other image coding methods \cite{PINS04},
with the aim of speeding up decompression: only the (short)
sub-alphabet identifier is encoded with a sophisticated (and slow)
method, whereas the sub-alphabet characters are encoded with a simple
and fast encoder (even in plain form). Said \cite{Sai05} gave a more
formal treatment to this concept, and designed a dynamic programming
algorithm to find the optimal partitioning given the desired number of
sub-alphabets, that is, the one minimizing the redundancy with respect
to the zero-order entropy of the sequence. He proved that an optimal
partitioning defines sub-alphabets according to ranges of character
frequencies, which reduces the cost of finding such
partitioning to polynomial time and space (more precisely, quadratic
on the alphabet size).

Our contribution in this article is, on one hand, to show that a
particular way to define the sub-alphabets, according to a
quantization of the logarithms of the inverse probabilities of the
characters, achieves $o(\Ho(s)+1)$ bits of redundancy per character of
the sequence $s$. This value, in particular, upper bounds the coding
efficiency of Said's optimal partitioning method.  On the other hand,
we apply the idea to sequence data structures supporting operations
\access/\rank/\select, achieving efficient support of indexed operations 
on the sequence, not only fast decoding.

We also consider various extensions and applications of our main
result.
In Section~\ref{sec:app-text} we show how our result can be used to
improve an existing text index that achieves $k$-th order
entropy~\cite{FMMN07,BHMR07}, so as to improve its redundancy and
query times. In this way we achieve the first self-index with space
bounded by $n\Hk(s) + o(n)(\Hk(s)+1)$ bits, for any $k=o(\lg_\sigma
n)$, able to count and locate pattern occurrences and extract any
segment of $s$ within the time complexities achieved by its fastest
predecessors. We also achieve new space/time tradeoffs for inverted
indexes and binary relations.
In Sections~\ref{sec:permutations} and~\ref{sec:functions} we show how
to apply our data structure to store a compressed permutation and a
compressed function, respectively, supporting direct and inverse
applications and in some cases improving upon previous results
\cite{BN09,BN11,MRRR03,HKP09}. We describe further applications to
text indexes and binary relations.  In particular, an application of
permutations, at the end of Section~\ref{sec:permutations}, achieves
for the first time compressed redundancy to store function $\Psi$ of
text indexes \cite{GV05,GGV03,Sad03}.
Section~\ref{sec:unionfind} shows how to maintain a dynamic collection
of disjoint sets, while supporting operations \union\ and \find, in
compressed form. This is, to the best of our knowledge, the first
result of this kind.

\section{Related work} \label{sec:related}

\paragraph{\em Sampling.}
A basic attempt to provide \rank\ and \select\ functionality on a
sequence $s[1..n]$ over alphabet $[1..\sigma]$ is to store $s$ in
plain form and the values $s.\rank_a(k\cdot i)$ for all $a \in
[1..\sigma]$ and $i \in [1..n/k]$, where $k \in [1..n]$ is a sampling
parameter.  This yields constant-time \access, $\Oh{k/\lg_\sigma n}$
time \rank, and $\Oh{k/\lg_\sigma n + \lg\lg n}$ time \select\ if we
process $\Theta(\lg_\sigma n)$ characters of $s$ in constant time
using universal tables, and organize the rank values for each
character in predecessor data structures.  The total space is $n
\lg\sigma + \Oh{(n/k)\sigma\lg n}$. For example, we can choose
$k=\sigma\lg n$ to achieve total space $n\lg\sigma + \Oh{n}$ (that is,
the data plus the redundancy space). Within this space we can achieve
time complexity $\Oh{\sigma\lg\sigma}$ for \rank\ and
$\Oh{\sigma\lg\sigma + \lg\lg n}$ for \select.

\paragraph{\em Succinct indexes.}
The previous construction separates the sequence data from the
``index'', that is, the extra data structures to provide fast
\rank\ and \select.  There are much more sophisticated representations
for the sequence data that offer constant-time access to
$\Theta(\lg_\sigma n)$ consecutive characters of $s$ (i.e., just as if
$s$ were stored in plain form), yet achieving $n\Hk(s)+o(n\lg\sigma)$
bits of space, for any $k=o(\lg_\sigma n)$ \cite{SG06,GN06,FV07}. We
recall that $\Hk(s)$ is the $k$-th order empirical entropy of $s$
\cite{Man01}, a lower bound to the space achieved by any statistical
compressor that models the character probabilities using the context
of their $k$ preceding characters, so $0 \le \Hk(s) \le \HH_{k-1}(s)
\le H_0(s) \le \lg\sigma$. Combining such sequence representations
with sophisticated indexes that require $o(n\lg\sigma)$ bits of
redundancy \cite{BHMR07,GOR10} (i.e., they are ``succinct''), we
obtain results like row 3 of Table~\ref{tab:previous}.

\paragraph{\em Bitmaps.}
A different alternative is to maintain one bitmap $b_a[1..n]$ per character 
$a \in [1..\sigma]$, marking with a 1 the positions $i$ where $s[i]=a$. Then 
$s.\rank_a(i) = b_a.\rank_1(i)$ and $s.\select_a(j) = b_a.\select_1(j)$. The 
bitmaps can be represented in compressed form using ``fully indexable 
dictionaries'' (FIDs) \cite{RRR02}, so that they operate in constant time and 
the total space is $n\Ho(s)+\Oh{n}+o(\sigma n)$ bits. Even with space-optimal 
FIDs \cite{Pat08,Pat09}, this space is $n\Ho(s)+\Oh{n}+
\Oh{\frac{\sigma n}{\lg^c n}}$ (and the time is $\Oh{c}$) for any constant 
$c$, which is acceptable only for polylog-sized 
alphabets, that is, $\sigma = \Oh{\textrm{polylog}(n)}$. An alternative is to 
use weaker compressed bitmap representations \cite{GV05,OS07} that can support
$\select_1$ in constant time and $\rank_1$ in time $\Oh{\lg n}$, and yield
an overall space of $n\Ho(s)+\Oh{n}$ bits. This can be considered as a succinct
index over a given sequence representation, or we can note that we can
actually solve $s.\access(i)$ by probing all the bitmaps $b_a.\access(i)$.
Although this takes at least $\Oh{\sigma}$ time, it is a simple illustration of
another concept: rather than storing an independent index on top of the data, 
the data is represented in a way that provides \access, \rank\ and \select\ 
operations with reasonable efficiency.

\paragraph{\em Wavelet trees.}
The wavelet tree~\cite{GGV03} is a structure integrating data and
index, that provides more balanced time complexities.  It is a balanced
binary tree with one leaf per alphabet character, and storing bitmaps
in its internal nodes, where constant-time \rank\ and
\select\ operations are supported. By using FIDs
\cite{RRR02} to represent those bitmaps, wavelet trees achieve
$n\Ho(s) + \Oh{\frac{n\lg\sigma \lg\lg n}{\lg n}}$ bits of space and
support all three operations in time proportional to their height,
$\Oh{\lg\sigma}$. Multiary wavelet trees \cite{FMMN07} replace the
bitmaps by sequences over sublogarithmic-sized alphabets
$[1..\sigma']$, $\sigma'=\Oh{\lg^\epsilon n}$ for $0<\epsilon<1$, in
order to reduce that height. The FID technique is extended to
alphabets of those sizes while retaining constant times.  Multiary
wavelet trees obtain the same space as the binary ones, but their time
complexities are reduced by an $\Oh{\lg\lg n}$ factor. Indeed, if
$\sigma$ is small enough, $\sigma = \Oh{\mathrm{polylog}(n)}$, the
tree height is a constant and so are all the query times. Recently,
the redundancy of multiary (and binary) wavelet trees has been reduced
to just $o(n)$ \cite{GRR08}, which yields the results in the first row
of Table~\ref{tab:previous}.%
\footnote{Because of these good results on polylog-sized alphabets, we
  focus on larger alphabets in this article, and therefore do not
  distinguish between redundancies of the form $o(n)\lg\sigma$ and
  $n\,o(\lg\sigma)$, writing $o(n\lg\sigma)$ for all. See also
  Footnote 6 of Barbay et al.~\cite{BHMR07}.}

\paragraph{\em Huffman-shaped wavelet trees.}
Another alternative to obtain zero-order compression is to give
Huffman shape to the wavelet tree \cite{GGV03}. This structure uses
$n\Ho(s)+o(n\Ho(s))+\Oh{n}$ bits even if the internal nodes use a plain 
representation, using $|b|+o(|b|)$ bits \cite{Cla96,Mun96}, for their 
bitmaps $b$. Limiting the height to $\Oh{\lg\sigma}$ retains the worst-case 
times of the balanced version and also the given space \cite{BN11}. In order 
to reduce the time complexities by
an $\Oh{\lg\lg n}$ factor, we can build multiary wavelet trees over
multiary Huffman trees \cite{Huf52}.  This can be combined with the
improved representation for sequences over small alphabets
\cite{GRR08} so as to retain the $n\Ho(s)+o(n)$ bits of space and
$\Oh{1+\frac{\lg\sigma}{\lg\lg n}}$ worst-case times of balanced
multiary wavelet trees. The interesting aspect of using Huffman-shaped
trees is that, if the \access\ queries distribute uniformly over the
text positions, and the character arguments $a$ to $\rank_a$ and
$\select_a$ are chosen according to their frequency in $s$, then the
average time complexities are $\Oh{1+\frac{\Ho(s)}{\lg\lg n}}$, the
weighted leaf depth. This result \cite[Thm.~5]{BN11} improves upon the
multiary wavelet tree representation \cite{GRR08} in the average case.
We note that this result \cite{BN11} involves $\Oh{\sigma\lg n}$ extra
bits of space redundancy, which is negligible only for $\sigma =
o(n/\lg n)$.

\paragraph{\em Reducing to permutations.}
A totally different sequence representation \cite{GMR06} improves the
times to poly-loglogarithmic on $\sigma$, that is, exponentially
faster than multiary wavelet trees when $\sigma$ is large enough. Yet,
this representation requires again uncompressed space, $n\lg\sigma +
\Oh{\frac{n\lg\sigma}{\lg\lg\sigma}}$.%
\footnote{The representation actually compresses to the $k$-th order
  entropy of a different sequence, not $s$ (A. Golynski, personal
  communication).}  It cuts the sequence into chunks of length
$\sigma$ and represents each chunk using a permutation $\pi$ (which
acts as an inverted index of the characters in the chunk). As both
operations $\pi()$ and $\pii()$ are needed, a representation
\cite{MRRR03} that stores the permutation within
$(1+\epsilon)\sigma\lg\sigma$ bits and computes $\pi()$ in constant
time and $\pii()$ in time $\Oh{1/\epsilon}$ is used. Depending on
whether $\pi$ or $\pii$ is represented explicitly, constant time is
achieved for \select\ or for \access. Using a constant value for
$\epsilon$ yields a slightly larger representation that solves both
\access\ and \select\ in constant time.

Later, the space of this representation was reduced to
$n\Ho(s)+o(n\lg\sigma)$ bits while retaining the time complexities of
one of the variants (constant-time \select) \cite{BHMR07}. In turn,
the variant offering constant-time \access\ was superseded by the
index of Grossi et al.~\cite{GOR10}, which achieves high-order
compression and also improves upon a slower alternative that takes the
same space \cite{BHMR07}. The best current times are either constant
or $\Oh{\lg\lg\sigma}$.  We summarize them in rows 2 to 4.

\bigskip

Our contribution, in rows 5 to 7 of Table~\ref{tab:previous}, is to
retain times loglogarithmic on $\sigma$, as in rows 2 to 4, while
compressing the redundancy space. This is achieved only with space
$\Ho(s)$, not $\Hk(s)$.  We also achieve average times depending on
$\Ho(s)$ instead of $\lg\sigma$.

\section{Alphabet partitioning} \label{sec:partitioning}

Let \(s [1..n]\) be a sequence over effective alphabet $[1..\sigma]$.%
\footnote{By {\em effective} we mean that every character appears in
  $s$, and thus \(\sigma \le n\). In Section~\ref{sec:effective} we
  handle the case of larger alphabets.}  We represent $s$ using an
alphabet partitioning scheme. Our data structure has three components:

\begin{enumerate}
\item A character mapping $\mapping[1..\sigma]$ that separates the
  alphabet into sub-alphabets. That is, \(\mapping\) is the sequence
  assigning to each character $a\in[1..\sigma]$ the sub-alphabet
\[\mapping [a] ~=~ \lceil \lg (n / \occ{a}{s}) \lg n \rceil, \]
where $\occ{a}{s}$ denotes the number of occurrences of character $a$
in $s$; note that $\mapping[a] \leq \left\lceil \lg^2 n \right\rceil$
for any $a \in [1..\sigma]$.
\item The sequence $t[1..n]$ of the sub-alphabets assigned to each
  character in $s$. That is, $t$ is the sequence over
  \(\left[1..\left\lceil \lg^2 n \right\rceil \right]\) obtained from
  $s$ by replacing each occurrence of $a$ by \(\mapping [a]\), namely
  $t[i] = \mapping[s[i]]$.
\item The subsequences $s_\ell[1..\sigma_\ell]$ of characters of each
  sub-alphabet. For \(0 \leq \ell \leq \lceil \lg^2 n\rceil\), let
  $\sigma_\ell = \occ{\ell}{\mapping}$, that is, the number of
  distinct characters of $s$ replaced by $\ell$ in $t$. Then \(s_\ell
  [1..\occ{\ell}{t}]\) is the sequence over $[1..\sigma_\ell]$ defined
  by
\[ s_\ell [t.\rank_\ell(i)] = \mapping.\rank_\ell (s [i]),\]
for all $1 \le i \le n$ such that $t[i] = \ell$.
\end{enumerate}

\begin{example} \label{ex:1}
Let $s = \texttt{"alabar a la alabarda"}$. Then $n=20$ and
$\occ{\mathtt{a}}{s} = 9$, $\occ{\mathtt{l}}{s} =
\occ{\mathtt{'\ '}}{s} = 3$, $\occ{\mathtt{b}}{s} =
\occ{\mathtt{r}}{s} = 2$, and $\occ{\mathtt{d}}{s} = 1$.  Accordingly,
we define the mapping as $\mapping[\mathtt{a}] = 5$,
$\mapping[\mathtt{l}] = \mapping[\mathtt{'\ '}] = 12$,
$\mapping[\mathtt{b}] = \mapping[\mathtt{r}] = 15$, and
$\mapping[\mathtt{d}] = 19$. As this is the effective alphabet, and assuming
that the order is \texttt{"'~',a,b,d,l,r"}, we have
$\mapping = (\mathtt{12,5,15,19,12,15})$. So the sequence of sub-alphabet
identifiers is $t[1..20] =
(\mathtt{5,12,5,15,5,15,12,5,12,12,5,12,5,12,5,15,5,15,19,5})$, and the
subsequences are $s_5 =
(\mathtt{1,1,1,1,1,1,1,1,1})$, $s_{12} = (\mathtt{2,1,1,2,1,2})$,
$s_{15} = (\mathtt{1,2,1,2})$, and $s_{19} = (\mathtt{1})$.
\end{example}

With these data structures we can implement the queries on $s$ as follows:
\begin{eqnarray*}
s.\access (i) & = & m.\select_\ell(s_\ell.\access(t.\rank_\ell(i))),
                        ~\textrm{where}~\ell = t.\access(i); \\[1ex]
s.\rank_a (i) & = & s_\ell.\rank_c (t.\rank_\ell (i)),
                        ~\textrm{where}~\ell = \mapping.\access(a)
                        ~\textrm{and}~ c = \mapping.\rank_\ell(a); \\[1ex]
s.\select_a (i) & = & t.\select_\ell (s_\ell.\select_c (i))
                        ~\textrm{where}~\ell = \mapping.\access(a)
                        ~\textrm{and}~ c = \mapping.\rank_\ell(a).
\end{eqnarray*}

\begin{example}
In the representation of Ex.~\ref{ex:1}, we solve
$s.\access(6)$ by first computing $\ell = t.\access(6) = \mathtt{15}$ and
then 
$m.\select_{15}(s_{15}.\access(t.\rank_{15}(6)))
=m.\select_{15}(s_{15}.\access(2))
=m.\select_{15}(2)
=\mathtt{r}$.
Similarly, to solve $s.\rank_\mathtt{l}(14)$ we compute
$\ell = \mapping.\access(\mathtt{l}) = 12$ and
$c = \mapping.\rank_{12}(\mathtt{l}) = 2$. Then we return
$s_{12}.\rank_2 (t.\rank_{12} (14))
=s_{12}.\rank_2 (6)
=3$.
Finally, to solve $s.\select_\mathtt{r}(2)$, we compute 
$\ell = \mapping.\access(\mathtt{r}) = 15$ and
$c = \mapping.\rank_{15}(\mathtt{r}) = 2$, and return
$t.\select_{15} (s_{15}.\select_2(2))
=t.\select_{15} (4)
=18$.
\end{example}

\subsection{Space analysis}
\label{sec:anal}

Recall that the zero-order entropy of $s[1..n]$ is defined as
\begin{equation} \label{eq:H}
\Ho(s) ~~=~~ \sum_{a\in[1..\sigma]}
\frac{\occ{a}{s}}{n}\lg\frac{n}{\occ{a}{s}}.
\end{equation}
Recall also that, by convexity, $n\Ho(s) \ge (\sigma-1)\lg n +
(n-\sigma+1)\lg \frac{n}{n-\sigma+1}$.  The next lemma gives the key
result for the space analysis.

\begin{lemma} \label{lem:space}
Let $s$, $t$, $\sigma_\ell$ and $s_\ell$ be as defined above. Then
$n\Ho(t) + \sum_\ell |s_\ell|\lg\sigma_\ell \in n\Ho(s) + o(n)$.
\end{lemma}
\begin{proof}
First notice that, for any character $1 \le \ell \le \lceil \lg^2 n
\rceil$ it holds that
\begin{equation} \label{eq:sumclase}
\sum_{c,~\ell=\mapping[c]} \occ{c}{s}
~~=~~ |s_\ell|\,.
\end{equation}
Now notice that, if $\mapping[a]=\mapping[b]=\ell$, then
\begin{eqnarray}
\ell ~~=~~ \lceil \lg(n/\occ{a}{s})\lg n \rceil 
	& = & \lceil \lg(n/\occ{b}{s})\lg n \rceil, \nonumber \\
\hspace*{-3cm} \textrm{therefore \hspace{2cm}} 
\lg (n / \occ{b}{s}) - \lg (n / \occ{a}{s}) 
	& < & 1 / \lg n, \nonumber \\
\hspace*{-3cm} \textrm{and so \hspace{5.5cm}} 
\occ{a}{s}  & < & 2^{1 / \lg n} \occ{b}{s}\,. \label{eq:ineq}
\end{eqnarray}
Now, fix $a$, call $\ell=m[a]$, and sum Eq.~(\ref{eq:ineq}) over all
those $b$ such that $\mapping[b]=\ell$. The second step uses
Eq.~(\ref{eq:sumclase}):
\begin{eqnarray}
\sum_{b,~\ell=\mapping[b]} \occ{a}{s} & < & \nonumber
\sum_{b,~\ell=\mapping[b]}  2^{1/\lg n}\,\occ{b}{s},\\[2ex]
\sigma_\ell\, \occ{a}{s} & < & 2^{1/\lg n}\,|s_\ell|\,, \nonumber \\[1ex]
\sigma_\ell & < & 2^{1/\lg n}\,|s_\ell|/\occ{a}{s}\,. \label{eq:ineq2}
\end{eqnarray}
Since $\sum_a \occ{a}{s} = \sum_\ell |s_\ell| = n$, we have, using
Eq.~(\ref{eq:H}), (\ref{eq:sumclase}), and (\ref{eq:ineq2}),
\begin{eqnarray*}
& & \lefteqn{n \Ho (t) ~+~ \sum_\ell |s_\ell| \lg \sigma_\ell} \\[1ex]
& = & \sum_\ell |s_\ell| \lg (n / |s_\ell|) 
      ~+~ \sum_\ell \sum_{a,~\ell=\mapping[a]} \occ{a}{s} \lg \sigma_\ell \\[1ex]
& < & \sum_\ell \sum_{a,~\ell=\mapping[a]} \occ{a}{s} \lg (n / |s_\ell|) 
      ~+~ \sum_\ell \sum_{a,~\ell=\mapping[a]} \occ{a}{s} \lg \left( 2^{1/
                            \lg n} |s_\ell| / \occ{a}{s} \right)\\[1ex]
& = & \sum_\ell \sum_{a,~\ell=\mapping[a]} \occ{a}{s} \lg (n / \occ{a}{s}) 
      ~+~ \sum_\ell \sum_{a,~\ell=\mapping[a]} \occ{a}{s} / \lg n\\[1ex]
& = & \sum_a \occ{a}{s} \lg (n / \occ{a}{s}) ~+~ n / \lg n \\[1ex]
& \in & n \Ho (s) + o(n)\,.
\end{eqnarray*}
\end{proof}

In other words, if we represent $t$ with \(\Ho (t)\) bits per character
and each $s_\ell$ with \(\lg \sigma_\ell\) bits per character,
we achieve a good overall compression.
Thus we can obtain a very compact representation
of a sequence $s$ by storing a compact representation of $t$ and storing
each $s_{\ell}$ as an ``uncompressed'' sequence over an alphabet
of size $\sigma_{\ell}$.

\subsection{Concrete representation}
\label{sec:repr}

We represent $t$ and $\mapping$ as multiary wavelet trees~\cite{FMMN07};
 we represent each $s_\ell$ as either a multiary wavelet tree or an instance of
Golynski et al.'s~\cite[Thm.~2.2]{GMR06}  \access/\rank/\select\ data structure,
depending on whether \(\sigma_\ell \leq \lg n\) or not.
The wavelet tree for $t$ uses at most
\(n \Ho (t) + \Oh{\frac{n(\lg\lg n)^2}{\lg n}}\) bits and operates
in constant time, because its alphabet size is polylogarithmic
(i.e., $\lceil \lg^2 n \rceil$).
If $s_\ell$ is represented as a wavelet tree, it uses at most
\(|s_\ell| \Ho (s_\ell) + \Oh{\frac{|s_\ell|\lg \sigma_\ell \lg\lg n}{\lg n}}\) bits%
\footnote{This is achieved by using block sizes
of length $\frac{\lg n}{2}$ and not $\frac{\lg |s_\ell|}{2}$, at the price
of storing universal tables of size $\Oh{\sqrt{n}\,\textrm{polylog}(n)}=o(n)$ bits.
Therefore all of our $o(\cdot)$ expressions involving $n$ and other variables will be
asymptotic in $n$.}
and again operates in constant time because $\sigma_\ell \le \lg n$; otherwise it uses at most
\(|s_\ell| \lg \sigma_\ell
     + \Oh{\frac{|s_\ell| \lg \sigma_\ell}{\lg \lg \sigma_\ell}}
\leq |s_\ell| \lg \sigma_\ell
     + \Oh{\frac{|s_\ell| \lg \sigma_\ell}{\lg \lg \lg n}}\) bits
(the latter because $\sigma_\ell > \lg n$).
Thus in either case the space for $s_\ell$ is bounded by
$|s_\ell| \lg \sigma_\ell +\Oh{\frac{|s_\ell|\lg \sigma_\ell}{\lg\lg\lg n}}$ bits.
Finally, since $\mapping$ is a sequence of length $\sigma$ over an alphabet of size
$\lceil \lg^2 n \rceil$, the wavelet tree for $\mapping$ takes
$\Oh{\sigma \lg \lg n}$ bits and also operates in constant time. Because of the convexity property we referred to in the
beginning of this section, $n\Ho(s) \ge (\sigma-1)\lg n$, the space for 
$\mapping$ is
$\Oh{\frac{n\lg\lg n}{\lg n}} \cdot \Ho(s)$.

Therefore we have $n\Ho(t)+o(n)$ bits for $t$, 
$\sum_\ell |s_\ell| \lg \sigma_\ell \left(1+\Oh{\frac{1}{\lg\lg\lg n}}\right)$
bits for the $s_\ell$ sequences, and $o(n)\Ho(s)$ bits for $\mapping$. Using
Lemma~\ref{lem:space}, this adds up to
$n\Ho(s) + o(n)\Ho(s) + o(n)$, 
where the $o(n)$ term is $\Oh{\frac{n}{\lg\lg\lg n}}$.

Using the variant of Golynski et al.'s data structure \cite[Thm.~4.2]{GMR06}, 
that gives constant-time \select, and $\Oh{\lg\lg\sigma}$ time for \rank\ and
\access, we obtain our first result in Table~\ref{tab:previous} (row 4). 
To obtain our second result (row 5), we use instead Grossi et al.'s result 
\cite[Cor.~2]{GOR10}, which gives constant-time \access, and $\Oh{\lg\lg\sigma}$
time for \rank\ and \select. We note that their structure takes space 
$|s_\ell|\Hk(s_\ell)+ \Oh{\frac{|s_\ell|\lg\sigma_\ell}{\lg\lg\sigma_\ell}}$, 
yet we only need this to be at most 
$|s_\ell|\lg\sigma_\ell + \Oh{\frac{|s_\ell|\lg\sigma_\ell}{\lg\lg\lg n}}$.

\begin{theorem} \label{thm:partitioning}
We can store $s[1..n]$ over effective alphabet $[1..\sigma]$ in
\(n \Ho (s) + o(n)(\Ho (s) + 1)\) 
bits and support \access, \rank\ and \select\ queries in
$\Oh{\lg \lg \sigma}$, $\Oh{\lg \lg \sigma}$, and $\Oh{1}$ time,
respectively (variant (i)); or in
$\Oh{1}$, $\Oh{\lg \lg \sigma}$ and
$\Oh{\lg \lg \sigma}$ time, respectively (variant (ii)).
\end{theorem}

We can refine
the time complexity by noticing that the only non-constant times are due to
operating on some sequence $s_\ell$, where the alphabet is of size
$\sigma_\ell < 2^{1/\lg n}|s_\ell|/\occ{a}{s}$, where $a$ is the character
in question, thus $\lg\lg\sigma_\ell = \Oh{\lg\lg(n/\occ{a}{s})}$. If we
assume that the characters $a$ used in queries distribute with the same
frequencies as in sequence $s$ (e.g., \access\ queries refer to randomly chosen
positions in $s$), then the average query time becomes 
$\Oh{\sum_a \frac{\occ{a}{s}}{n} \lg\lg\frac{n}{\occ{a}{s}}} = \Oh{\lg \Ho(s)}$
by the log-sum inequality\footnote{Given $\sigma$ pairs of numbers 
$a_i,b_i>0$, it holds that $\sum a_i \lg\frac{a_i}{b_i} \ge
\left(\sum a_i\right)\lg\frac{\sum a_i}{\sum b_i}$.
Use $a_i = \occ{i}{s}/n$ and $b_i = - a_i \lg a_i$ to obtain the result.}.

\begin{corollary} \label{cor:partitioning-avg}
The $\Oh{\lg\lg \sigma}$ time complexities in
Theorem~\ref{thm:partitioning} are also
$\Oh{\lg\lg (n/\occ{a}{s})}$, where $a$ stands for $s[i]$ in the 
{\access} query, and for the character argument in the 
$\rank_a$ and $\select_a$ queries. If these characters $a$ distribute on queries
with the same frequencies as $s$, the average time complexity for those
operations is $\Oh{\lg \Ho(s)}$.
\end{corollary}

Finally, to obtain our last result in Table~\ref{tab:previous} we use again
Golynski et al.'s representation \cite[Thm.~4.2]{GMR06}. Given 
$\epsilon |s_\ell| \lg \sigma_\ell$ extra space to store the inverse of a
permutation inside chunks, it answers \select\ queries in time $\Oh{1}$ 
and \access\ queries in time $\Oh{1/\epsilon}$ (these two complexities can be
interchanged), and \rank\ queries in time
$\Oh{\lg\lg\sigma_\ell}$. While we initially considered
$1/\epsilon = \lg\lg\sigma_\ell$ to achieve the main result, using a constant
$\epsilon$ yields constant-time \select\ and \access\ simultaneously.

\begin{corollary} \label{cor:partitioning-constant}
We can store $s[1..n]$ over effective alphabet $[1..\sigma]$ in
\((1+\epsilon) n \Ho (s) + o(n)\) bits, for any constant $\epsilon>0$, and 
support \access, $\rank_a$ and \select\ queries in
$\Oh{1/\epsilon}$, $\Oh{\lg \lg \min(\sigma,n/\occ{a}{s})}$, and $\Oh{1}$ time,
respectively (variant (i)); or in $\Oh{1}$, $\Oh{\lg \lg
\min(\sigma,n/\occ{a}{s})}$, and $\Oh{1/\epsilon}$, respectively (variant
(ii)).
\end{corollary}

\subsection{Handling arbitrary alphabets} \label{sec:effective}

In the most general case, $s$ is a sequence over an alphabet $\Sigma$
that is not an effective alphabet, and $\sigma$ characters from $\Sigma$
occur in $s$.
Let $\Sigma'$ be the set of elements that occur in $s$; we can map
characters from $\Sigma'$ to elements of $[1..\sigma]$ by replacing
each $a\in \Sigma'$ with its rank in $\Sigma'$.
All elements of $\Sigma'$ are stored in the ``indexed dictionary'' (ID) data
structure described by Raman et al.~\cite{RRR02}, so that the
following queries are supported in constant time:
for any $a\in\Sigma'$ its rank in $\Sigma'$ can be found (for any
$a\not\in \Sigma'$ the answer is $-1$);
and for any $i\in [1..\sigma]$ the $i$-th smallest element in $\Sigma'$
can be found. The ID structure uses
$\sigma\lg(e\mFromRaman/\sigma)+o(\sigma)+\Oh{\lg\lg\mFromRaman}$ bits
of space, where
$e$ is the base of the natural logarithm and
$\mFromRaman$ is the maximal element in $\Sigma'$;
the value of $\mFromRaman$ can be specified with additional
$\Oh{\lg\mFromRaman}$ bits.
We replace every element in $s$ by its rank in $\Sigma'$, and the
resulting sequence is stored using Theorem~\ref{thm:partitioning}.
Hence, in the general case the space usage is increased by
$\sigma\lg(e\mFromRaman/\sigma)+o(\sigma)+\Oh{\lg\mFromRaman}$ bits and
the asymptotic time complexity of queries remains unchanged.
Since we are already spending $\Oh{\sigma\lg\lg n}$ bits in our data
structure, this increases the given space only by 
$\Oh{\sigma \lg(\mu/\sigma)}$.

\subsection{Application to fast encode/decode}

Given a sequence $s$ to encode, we can build mapping $\mapping$ from its
character frequencies $\occ{a}{s}$, and then encode each $s[i]$ as the pair
$(m[s[i]], m.\rank_{m[s[i]]}(s[i]))$. 
Lemma~\ref{lem:space} (and some of the discussion that 
follows in Section~\ref{sec:repr}) shows that the overall output size is
$n\Ho(s)+o(n)$ bits if we represent the sequence of pairs 
by partitioning it into three sequences: (1) the left part of the pairs in one 
sequence, using Huffman coding on chunks (see next); (2) the right part of the 
pairs corresponding to values where $\sigma_\ell < \lg n$ in a second sequence, 
using Huffman coding on chunks; (3) the remaining right parts of the pairs, 
using plain encoding in $\lceil \lg \sigma_\ell \rceil$ bits (note 
$\sigma_\ell = m.\rank_\ell(\sigma)$). The Huffman coding on chunks groups
$\frac{\lg n}{4\lg\lg n}$ characters, so that even in the case of the left 
parts, where the alphabet is of size $\lceil \lg^2 n \rceil$, the total length 
of a chunk is at most $\frac{\lg n}{2}$ bits, and hence the Huffman coding
table occupies just $\Oh{\sqrt{n}\lg n}$ bits. 
The redundancy on top of $\Ho(s)$ adds up to $\Oh{\frac{n \lg\lg n}{\lg n}}$ 
bits in sequences (1) and (2) (one bit
of Huffman redundancy per chunk) and $\Oh{\frac{n}{\lg\lg n}}$ in sequence (3) 
(one bit, coming from the ceil function, per $\lg \sigma_\ell > \lg\lg n$
encoded bits).

The overall encoding time is $\Oh{n}$. A pair $(\ell,o)$ is 
decoded as $s[i] = m.\select_\ell (o)$, where after reading $\ell$ we can
compute $\sigma_\ell$ to determine whether $o$ is encoded in sequence (2) or
(3). Thus decoding also takes constant time if we can decode Huffman codes in
constant time. This can be achieved by using canonical codes and limiting the
height of the tree \cite{MT97,GN09}. 

This construction gives an interesting space/time tradeoff with respect to
classical alternatives. Using just Huffman coding yields $\Oh{n}$
encoding/decoding time, but only guarantees $n\Ho(s)+\Oh{n}$ bits of space. 
Using arithmetic coding achieves $n\Ho(s)+\Oh{1}$ bits, but encoding/decoding
is not linear-time. The tradeoff given by our encoding, $n\Ho(s)+o(n)$ bits
and linear-time decoding, is indeed the
reason why it is used in practice in various folklore applications, as
mentioned in the Introduction. In Section~\ref{sec:exp-compr} we
experimentally evaluate these ideas and show they are practical. 
Next, we give more far-fetched applications
of the \rank/\select\ capabilities of our structure, which go much beyond
the mere compression.

\section{Applications to text indexing}
\label{sec:app-text}

Our main result can be readily carried over various types of indexes for
text collections. These include self-indexes for general texts, and positional
and non-positional inverted indexes for natural language text collections.

\subsection{Self-indexes}
\label{sec:higher-order}

A self-index represents a sequence and supports
operations related to text searching on it.
A well-known self-index~\cite{FMMN07} achieves $k$-th order entropy space
by partitioning the Burrows-Wheeler transform~\cite{BW94}
of the sequence and encoding
each partition to its zero-order entropy. Those partitions must support
queries \access\ and \rank. By using Theorem~\ref{thm:partitioning}$(i)$ to
represent such partitions, we achieve the following result, improving previous
ones~\cite{FMMN07,GMR06,BHMR07}.

\begin{theorem} \label{thm:self}
  Let $s[1..n]$ be a sequence over effective alphabet $[1..\sigma]$.
  Then we can represent $s$ using $n\Hk(s) + o(n) (\Hk (s) + 1)$ bits,
  for any $k \le (\delta \lg_\sigma n)-1$ and constant
  $0<\delta<1$, while supporting the following queries:
  $(i)$ count the number of occurrences of a pattern $p[1..m]$ in $s$,
  in time $\Oh{m\lg\lg\sigma}$;
  $(ii)$ locate any such occurrence in time $\Oh{\lg n\lg\lg\lg
    n\lg\lg\sigma}$;
  $(iii)$ extract $s[l,r]$ in time $\Oh{(r-l) \lg\lg\sigma + \lg
    n\lg\lg\lg n\lg\lg\sigma}$.
\end{theorem}

\begin{proof}
To achieve $n\Hk(s)$ space, the Burrows-Wheeler transformed text $s^{bwt}$ is 
partitioned into $r \le \sigma^k$ sequences $s^1 \ldots s^r$ \cite{FMMN07}. 
Since $k \le (\delta\lg_\sigma n)-1$, it follows that $\sigma^{k+1} \le
n^\delta$. The space our Theorem~\ref{thm:partitioning}$(i)$ achieves using 
such a partition is 
$\sum_i |s^i|\Ho(s^i) + (\Ho(s^i)+1)\cdot\Oh{\frac{|s^i|}{\lg\lg\lg|s^i|}}$.
Let $\gamma = (1-\delta)/2$ (so $0<\delta+\gamma<1$ whenever $0<\delta<1$) and 
classify the sequences $s^i$ according to whether $|s^i| < n^\gamma$ (short 
sequences) or not (long sequences). The total space occupied by the short 
sequences can be bounded by $r \cdot \Oh{n^\gamma \lg\sigma} = 
\Oh{n^{\delta+\gamma}} = o(n)$ bits. In turn, the space occupied by the
long sequences can be bounded by
$\sum_i \left(1+\frac{c}{\lg\lg\lg n}\right)\cdot|s^i|\Ho(s^i) + 
 \frac{d\,|s^i|}{\lg\lg\lg n}$ bits, for some constants $c,d$. An argument very
similar to the one used by Ferragina et al.~\cite[Thm.~4.2]{FMMN07} shows
that these add up to 
$\left(1+\frac{c}{\lg\lg\lg n}\right)\cdot n\Hk(s) + 
 \frac{dn}{\lg\lg\lg n}$. Thus the space
is $n\Hk(s) + o(n)(\Hk(s)+1)$. Other structures required by the alphabet
partitioning technique \cite{FMMN07} add $o(n)$ more bits if $\sigma^{k+1}
\le n^\delta$.

The claimed time complexities stem from the
\rank\ and \access\ times on the partitions.
The partitioning scheme \cite{FMMN07} adds
just constant time overheads. Finally, to achieve the claimed locating and 
extracting times we sample one out of every $\lg n \lg\lg\lg n$ text positions.
This maintains our lower-order space term $o(n)$ within 
$\Oh{\frac{n}{\lg\lg\lg n}}$. 
\end{proof}

In case $[1..\sigma]$ is not the effective alphabet we proceed as described 
in Section~\ref{sec:effective}.
Our main improvement compared to Theorem~4.2 of Barbay et al.~\cite{BHMR07}
is that we have compressed the redundancy from $o(n\lg\sigma)$ to 
$o(n)(\Hk(s)+1)$. Our improved locating times, instead, just owe to the denser
sampling, which Barbay et al.\ could also use.

Note that, by using the zero-order representation of Golynski et
al.~\cite[Thm.~4]{GRR08}, we
could achieve even better space, $n\Hk(s)+o(n)$ bits, and time complexities
$\Oh{1+\frac{\lg\sigma}{\lg\lg n}}$ instead of $\Oh{\lg\lg\sigma}$.%
\footnote{One can retain $\lg\lg n$ 
in the denominator by using block sizes depending on $n$ and not on $|s^i|$, 
as explained in the footnote at the beginning of Section~\ref{sec:repr}.}
Such complexities are convenient for not so large alphabets.


\subsection{Positional inverted indexes}

These indexes retrieve the positions of any word in a text. They
may store the text compressed up to the zero-order
entropy of the {\em word} sequence $s[1..n]$, which allows direct access to
any word. In addition they store the list of the positions where each distinct
word occurs. These lists can be compressed up to a second zero-order entropy
space \cite{NM07}, so the overall space is at least $2n\Ho(s)$. By regarding
$s$ as a sequence over an alphabet $[1..\nu]$ (corresponding here to the
vocabulary), Theorem~\ref{thm:partitioning} represents $s$ within 
$n\Ho(s) + o(n)(\Ho(s)+1)$ bits, which provides state-of-the-art
compression ratios. Variant $(ii)$ supports 
constant-time access to any text word $s[i]$, and access to the $j$th 
entry of the list of any word $a$ ($s.\select_a(j)$) in time $\Oh{\lg\lg\nu}$. 
These two time complexities are exchanged in variant $(i)$, or both can be 
made constant by spending $\epsilon n\Ho(s)$ redundancy for any constant 
$\epsilon>0$ (using Corollary~\ref{cor:partitioning-constant}). 
The length of the inverted lists can be stored within 
$\Oh{\nu\lg n}$ bits (we also need at least this space to store the sequence 
content of each word identifier).

Apart from supporting this basic access to the list of each word, this
representation easily supports operations that are more complex to implement
on explicit
inverted lists \cite{BLOLS09}. For example, we can find the phrases formed
by two words $w_1$ and $w_2$, that appear $n_1$ and $n_2$ times, by finding the 
occurrences of one and verifying the other in the text, in time 
$\Oh{\min(n_1,n_2)\lg\lg\nu}$. Other more sophisticated intersection algorithms
\cite{BLOLS09} can be implemented by supporting operations such as ``find the
position in the list of $w_2$ that follows the $j$th occurrence of word $w_1$''
($s.\rank_{w_2}(s.\select_{w_1}(j))+1$, in time $\Oh{\lg\lg\nu}$) or ``give
the list of word $w$ restricted to the range $[x..y]$ in the collection''
($s.\select_w(s.\rank_w(x-1)+j)$, for $j \ge 1$, until exceeding $y$, in time
$\Oh{\lg\lg\nu}$ plus $\Oh{1}$ per retrieved occurrence).
In Section~\ref{sec:exp-invl} we evaluate this representation in practice.

\subsection{Binary relations and non-positional inverted indexes}
\label{sec:binrels}

Let $R \subseteq L \times O$, where $L=[1..\lambda]$ are called {\em labels}
and $O=[1..\kappa]$ are called {\em objects}, be a binary relation consisting 
of $n$ pairs. Barbay et al.~\cite{BGMR07} represent the relation as follows. 
Let $l_{i_1} < l_{i_2} < \ldots < l_{i_k}$ be the labels related to an object 
$o\in O$. Then we define sequence $s_o = l_{i_1} l_{i_2} \ldots l_{i_k}$. The 
representation for $R$ is the concatenated sequence 
$s = s_1 \cdot s_2 \cdot \ldots \cdot s_\kappa$, of length $n$, and 
the bitmap $b = 1 0^{|s_1|} 1 0^{|s_2|} \ldots 1 0^{|s_\kappa|} 1$, of length 
$n+\kappa+1$.

This representation allows one to efficiently support various queries 
\cite{BGMR07}:
\begin{description}
\item[{\tt table\_access}]:
is $l$ related to $o$?,
	$s.\rank_l(b.\rank_0(b.\select_1(o+1))) > 
	 s.\rank_l(b.\rank_0(b.\select_1(o)))$;
\item[{\tt object\_select}]:
the $i$th label related to an object $o$, 
	$s.\access(b.\rank_0(b.\select_1(o)+i))$; 
\item[{\tt object\_nb}]:
the number of labels an object $o$ is related to, 
	$b.\select_1(o+1)-b.\select_1(o)-1$; 
\item[{\tt object\_rank}]:
the number of labels $< l$ an object $o$ is related to,
	carried out with a predecessor search in 
        $s[b.\rank_0(b.\select_1(o))..b.\rank_0(b.\select_1(o+1))]$, an
        area of length $\Oh{\lambda}$.
	The predecessor data structure requires $o(n)$ bits as it
        is built over values sampled every $\lg^2\lambda$ positions, and the
	query is completed with a binary search;
\item[{\tt label\_select}]:
the $i$th object related to a label $l$, 
	$b.\rank_1(b.\select_0(s.\select_l(i)))$;
\item[{\tt label\_nb}]:
the number of objects a label $l$ is related to,
	$s.\rank_l(n)$. It can also be solved like {\tt object\_nb},
        using a bitmap similar to $b$ that traverses the table label-wise;
\item[{\tt label\_rank}]:
the number of objects $< o$ a label $l$ is related to,
	$s.\rank_l(b.\rank_0(b.\select_1(o)))$. 
\end{description}

Bitmap $b$ can be represented within $\Oh{\kappa\lg\frac{n}{\kappa}} =
o(n) + \Oh{\kappa}$ bits and support all the operations in constant time 
\cite{RRR02}, and its label-wise variant needs $o(n)+\Oh{\lambda}$ bits.
The rest of the space and time complexities depend on how we represent $s$.

Barbay et al.~\cite{BGMR07} used Golynski et al.'s representation for $s$
\cite{GMR06}, so they achieved $n\lg\lambda + o(n\lg\lambda)$ bits of space,
and the times at rows 2 or 3 in Table~\ref{tab:previous} for the
operations on $s$ (later, Barbay et al.~\cite{BHMR07} achieved $n\Hk(s) +
o(n\lg\lambda)$ bits and slightly worse times).
By instead representing $s$ using Theorem~\ref{thm:partitioning},
we achieve compressed redundancy and slightly improve the times. 

To summarize, we achieve $n\Ho(s) + o(n)(\Ho(s)+1) + 
\Oh{\kappa + \lambda}$ bits, and solve 
{\tt label\_nb} and {\tt object\_nb} in constant
time, and {\tt table\_access} and {\tt label\_rank} in time $\Oh{\lg\lg\lambda}$.
For {\tt label\_select}, {\tt object\_select} and {\tt object\_rank} we
achieve times $\Oh{1}$, $\Oh{\lg\lg\lambda}$ and $\Oh{(\lg\lg\lambda)^2}$, 
respectively, or $\Oh{\lg\lg\lambda}$, $\Oh{1}$ and $\Oh{\lg\lg\lambda}$, 
respectively. Corollary~\ref{cor:partitioning-constant} yields a slightly
larger representation with improved times, and a multiary wavelet tree
\cite[Thm.~4]{GRR08} achieves less space and different times; we leave the 
details to the reader.

A non-positional inverted index is a binary relation that associates each
vocabulary word with the documents where it appears. A typical representation
of the lists encodes the differences between consecutive values, achieving
overall space $\Oh{\sum_v n_v \lg\frac{n}{n_v}}$, where word $v$ appears in
$n_v$ documents \cite{WMB99}. In our representation as a binary relation,
it turns out that $\Ho(s) = \sum_v n_v \lg\frac{n}{n_v}$, and thus the
space achieved is comparable to the classical schemes. Within this space,
however, the representation offers various interesting operations apart
from accessing the $i$th element of a list (using {\tt label\_select}),
including support for various list intersection algorithms;
see Barbay et al.~\cite{BGMR07,BHMR07} for more details.

\section{Compressing permutations} \label{sec:permutations}

Barbay and Navarro \cite{BN09} measured the compressibility of a permutation 
$\pi$ in terms of the entropy of the distribution of the lengths of {\em runs} 
of different kinds. Let $\pi$ be covered by $\rho$ runs (using any of the 
previous definitions of runs~\cite{LP94,BN09,Meh79})
of lengths $\runs(\pi) = \langle n_1, \ldots, n_\rho\rangle$.
Then $\HH(\runs (\pi)) = \sum \frac{n_i}{n}\lg\frac{n}{n_i} \le
\lg\rho$ is called the {\em entropy} of the runs (and, because $n_i
\ge 1$, it also holds $n\HH(\runs(\pi)) \ge (\rho-1)\lg n$).
In their most recent variant \cite{BN11} they were able to store $\pi$ in
$2n\HH(\runs(\pi))+o(n)+ \Oh{\rho\lg n}$ 
bits for runs consisting of interleaved sequences of increasing or decreasing
values, and $n\HH(\runs(\pi))+o(n)+\Oh{\rho\lg n}$ 
bits for contiguous sequences of increasing or decreasing values (or,
alternatively, interleaved sequences of consecutive values). 
In all cases they can compute $\pi()$ and $\pi^{-1}()$ in 
$\Oh{\frac{\lg \rho}{\lg\lg n}}$ time, which on average drops to 
$\Oh{1+\frac{\HH(\runs(\pi))}{\lg\lg n}}$ if the queries are 
uniformly distributed in $[1..n]$.

We now show how to use \access/\rank/\select\ data structures to support the 
operations more efficiently while retaining compressed redundancy space.
In general terms, we exchange their $\Oh{\rho\lg n}$ space term by 
$o(n)\,\HH(\runs(\pi))$, and improve their times to $\Oh{\lg\lg\rho}$ in the
worst case, and to $\Oh{\lg\HH(\runs(\pi))}$ on average (again, this is an
improvement only if $\rho$ is not too small).

We first consider interleaved sequences of
increasing or decreasing values as first defined by
Levcopoulos and Petersson \cite{LP94} for adaptive
sorting, and later on for compression~\cite{BN09}, and then give
improved results for more restricted classes of runs.
In both cases we first consider the application of the permutation
$\pi()$ and its inverse, $\pii()$, and later show how to extend the
support to the iterated application of the permutation, $\pi^k()$,
extending and improving previous results~\cite{MRRR03}.

\begin{theorem} \label{thm:runs}
  Let $\pi$ be a permutation on $n$ elements that consists of $\rho$
  interleaved increasing or decreasing runs, of lengths $\runs (\pi)$.
  Suppose we have a data structure that stores a sequence $s[1..n]$
  over effective alphabet $[1..\rho]$ within $\psi(n,\rho,\Ho(s))$ bits,
  supporting queries \access, \rank, and \select\ in time $\tau(n,\sigma)$.
  Then, given its run decomposition, we can store $\pi$ in 
  $2\psi(n,\rho,\HH(\runs(\pi))) + \rho$ bits,
  and perform $\pi()$ and $\pii()$ queries in time $\Oh{\tau(n,\sigma)}$.
\end{theorem}

\begin{proof}
We first replace all the elements of the $r$th run by $r$, for $1\le r\le\rho$.
Let $s$ be the resulting sequence and let $s'$ be $s$ permuted according to 
$\pi$, that is, \(s' [\pi (i)] = s [i]\).
We store $s$ and $s'$ using the given sequence representation, and also store 
$\rho$ bits indicating whether each run is increasing or decreasing. Note that
$\Ho(s) = \Ho(s') = \HH(\runs(\pi))$, which gives the claimed space.

Notice that an increasing run preserves the relative order of the elements of
a subsequence. Therefore, if \(\pi (i)\) is part of an increasing run, then 
\(s'.\rank_{s [i]} (\pi (i)) = s.\rank_{s [i]} (i)\), so
\[\pi (i) = s'.\select_{s [i]} \left( s.\rank_{s [i]} (i) \right).\]
If, instead, \(\pi (i)\) is part of a decreasing run, then \(s'.\rank_{s [i]} (\pi (i))
= s.\rank_{s [i]} (n) +1 - s.\rank_{s [i]} (i)\), so
\[\pi (i) = s'.\select_{s [i]} \left( s.\rank_{s [i]} (n) +1 - s.\rank_{s [i]} (i) \right).\]
A $\pii()$ query is symmetric (exchange $s$ and $s'$ in the formulas).  
Therefore we compute $\pi()$ and $\pii$ with $\Oh{1}$ calls to \access,
\rank, and \select\ on $s$ or $s'$.
\end{proof}


\begin{example}
Let $\pi = 1,\mathit{8},\mathbf{9},3,\mathit{6},\mathbf{10},5,\mathit{4},
\mathbf{11},7,\mathit{2},\mathbf{12}$ be formed by three runs (indicated by
the different fonts). Then $s = (1,2,3,1,2,3,1,2,3)$ and 
$s'=(1,2,1,2,1,2,1,2,3,3,3,3)$.
\end{example}

By combining Theorem~\ref{thm:runs} with the representations in 
Theorem~\ref{thm:partitioning}, we obtain a result that improves upon previous 
work \cite{BN09,BN11} in time complexity. Note that if the queried positions $i$
are uniformly distributed in $[1..n]$, then all the \access, \rank, and
\select\ queries follow the same character distribution of the runs, and
Corollary~\ref{cor:partitioning-avg} applies.
Note also that the $\rho$ bits are contained in $o(n)\HH(\runs(\pi))$ because
$n\HH(\runs(\pi)) \ge (\rho-1)\lg n$.

\begin{corollary} \label{cor:runs}
  Let $\pi$ be a permutation on $n$ elements that consists of $\rho$
  interleaved increasing or decreasing runs, of lengths $\runs (\pi)$.
  Then, given its run decomposition, we can store $\pi$ in 
  \(2 n \HH (\runs (\pi)) + o(n)(\HH (\runs (\pi)) +
  1)\) bits and perform $\pi()$ and $\pii()$ queries in
  $\Oh{\lg \lg \rho}$ time.
  On uniformly distributed queries the average times are
  $\Oh{\lg \HH(\runs(\pi))}$.
\end{corollary}

\smallskip

The case where the runs are contiguous is handled within around half the 
space, as a simplification of Theorem~\ref{thm:runs}.

\begin{corollary} \label{cor:contruns-gral}
  Let $\pi$ be a permutation on $n$ elements that consists of $\rho$
  contiguous increasing or decreasing runs, of lengths $\runs (\pi)$.
  Suppose we have a data structure that stores a sequence $s[1..n]$
  over effective alphabet $[1..\rho]$ within $\psi(n,\rho,\Ho(s))$ bits,
  supporting queries \access\ and \rank\ in time $\tau_\mathtt{ar}(n,\rho)$, 
  and \select\ in time $\tau_\mathtt{s}(n,\sigma)$.
  Then, given its run decomposition, we can store $\pi$ in 
  $\psi(n,\rho,\HH(\runs(\pi)) + \rho\lg\frac{n}{\rho}+\Oh{\rho}+o(n)$ bits of 
  space, and perform $\pi()$ queries in time $\Oh{\tau_\mathtt{s}(n,\rho)}$ and 
  $\pii()$ queries in time $\Oh{\tau_\mathtt{ar}(n,\rho)}$.
\end{corollary}
\begin{proof}
We proceed as in Theorem~\ref{thm:runs}, yet now sequence $s$ is of the form
$s = 1^{n_1} 2^{n_2} \ldots \rho^{n_\rho}$, and therefore it can be represented
as a bitmap $b=10^{n_1-1} 10^{n_2-1}\ldots 10^{n_\rho-1}1$. The required
operations are implemented as follows: $s.\access(i) = b.\rank_1(i)$,
$s.\rank_{s[i]}(i) = i-b.\select_1(s[i])+1$,
$s.\rank_{s[i]}(n) = b.\select_1(s[i]+1)-b.\select_1(s[i])$,
and $s.\select_a(i) = b.\select_1(a)+i-1$. Those operations are solved in
constant time using a representation for $b$ that takes
$(\rho+1)\lg(e(n+1)/(\rho+1))+o(n)$ bits
\cite{RRR02}. Added to the $\rho$ bits that mark increasing or decreasing
sequences, this gives the claimed space. The claimed time complexities 
correspond to the operations on $s'$, as those in $s$ take constant time.
\end{proof}

Once again, by combining the corollary with representation $(i)$ in 
Theorem~\ref{thm:partitioning}, we 
obtain results that improve upon previous work \cite{BN09,BN11}.
The $\rho\lg\frac{n}{\rho}$ bits are in $o(n)(\HH(\runs(\pi))+1)$ 
because they are $o(n)$ as long as $\rho = o(n)$, and otherwise they are
$\Oh{\rho} = o(\rho\lg n)$, and $(\rho-1)\lg n \le n\HH(\runs(\pi))$.

\begin{corollary} \label{cor:contruns}
  Let $\pi$ be a permutation on $n$ elements that consists of $\rho$
  contiguous increasing or decreasing runs, of lengths $\runs (\pi)$.
  Then, given its run decomposition, we can store $\pi$ in 
  \(n \HH (\runs (\pi)) + o(n)(\HH (\runs (\pi)) + 1)\) bits and 
  perform $\pi()$ queries in time $\Oh{1}$ and $\pii()$ queries in
  time $\Oh{\lg\lg\rho}$ (and $\Oh{\lg\HH(\runs(\pi))}$ on average for 
  uniformly distributed queries).
\end{corollary}

\smallskip

If $\pi$ is formed by interleaved but strictly incrementing ($+1$) or 
decrementing ($-1$) runs, then $\pii$ is formed by contiguous runs,
in the same number and length \cite{BN09}. This gives an immediate 
consequence of Corollary~\ref{cor:contruns-gral}.

\begin{corollary} \label{cor:strict_runs}
  Let $\pi$ be a permutation on $n$ elements that consists of $\rho$
  interleaved strict increasing or decreasing runs, of lengths $\runs (\pi)$.
  Suppose we have a data structure that stores a sequence $s[1..n]$
  over effective alphabet $[1..\rho]$ within $\psi(n,\rho,\Ho(s))$ bits,
  supporting queries \access\ and \rank\ in time $\tau_\mathtt{ar}(n,\rho)$, 
  and \select\ in time $\tau_\mathtt{s}(n,\sigma)$.
  Then, given its run decomposition, we can store $\pi$ in 
  $\psi(n,\rho,\HH(\runs(\pi))) + \rho\lg\frac{n}{\rho}+\Oh{\rho}+o(n)$ bits of 
  space, and perform $\pi()$ queries in time $\Oh{\tau_\mathtt{ar}(n,\rho)}$ 
  and $\pii()$ queries in time $\Oh{\tau_\mathtt{s}(n,\sigma)}$.
\end{corollary}

\smallskip

For example we can achieve the same space of Corollary~\ref{cor:contruns}, yet
with the times for $\pi$ and $\pii$ reversed.
Finally, if we consider runs for $\pi$ that are both contiguous and 
incrementing or decrementing, then  so are the runs of $\pii$.
Corollary~\ref{cor:contruns-gral} can be further simplified as both $s$ and $s'$
can be represented with bitmaps.

\begin{corollary} \label{cor:contstrictruns}
  Let $\pi$ be a permutation on $n$ elements that consists of $\rho$
  contiguous and strict increasing or decreasing runs, of lengths $\runs (\pi)$.
  Then, given its run decomposition, we can store $\pi$ in 
  $2\rho\lg\frac{n}{\rho}+\Oh{\rho}+o(n)$ bits, and perform $\pi()$ 
  and $\pii()$ in $\Oh{1}$ time.
\end{corollary}

\smallskip

We now show how to achieve exponentiation, $\pi^k(i)$ or $\pi^{-k}(i)$,
within compressed space.
Munro et al.~\cite{MRRR03} reduced the problem of supporting
exponentiation on a permutation $\pi$ to the support of the direct and
inverse application of another permutation, related but with quite
distinct runs than $\pi$. Combining it with any of our results does yield 
compression, but one where the space depends on the lengths of both the 
runs and cycles of $\pi$.
The following construction, extending the technique by Munro et 
al.~\cite{MRRR03}, retains the compressibility in terms of the runs 
of $\pi$, which is more natural. 
It builds an index that uses small additional space to
support the exponentiation, thus allowing the compression of the main
data structure with any of our results.

\begin{theorem} \label{thm:exponent} Suppose we have a representation
  of a permutation $\pi$ on $n$ elements that supports queries $\pi()$ 
  in time \(\tau^+\) and queries $\pii()$ in time $\tau^-$. 
  Then for any \(t \leq n\), we can build a data structure that takes 
  $\Oh{(n / t) \lg n}$ bits and, used in conjunction with operation $\pi()$
  or $\pii()$, supports $\pi^k()$ and $\pi^{-k}()$ queries in 
  $\Oh{t\,\min(\tau^+,\tau^-)}$ time.
\end{theorem}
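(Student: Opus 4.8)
The plan is to exploit the cycle structure of $\pi$ and to place, on each sufficiently long cycle, evenly spaced ``shortcut'' marks that reduce any exponentiation to two short \emph{forward} walks of length at most $t$. The companion index stores only information about these marks, which is what keeps it within $\Oh{(n/t)\lg n}$ bits while leaving the representation $D$ of $\pi$ untouched.

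First I would decompose $\pi$ into its disjoint cycles and call a cycle \emph{long} if its length exceeds $t$. On each long cycle $c$ of length $l_c$ I would traverse the cycle from a canonical starting element and mark the elements lying at cyclic offsets $0, t, 2t, \ldots$, so that consecutive marks are at most $t$ apart and every element reaches the next mark within $t$ forward applications of $\pi$. Short cycles receive no marks. The number of marks on a long cycle is $\lceil l_c/t\rceil \le 2 l_c/t$, so the total number of marks, and also the number of long cycles (each of length $>t$), is $\Oh{n/t}$. For the set $\mathcal M$ of marked elements I would store an indexed dictionary~\cite{RRR02} supporting $\Oh{1}$-time membership; for each mark its cycle identifier, its offset within the cycle, and its rank among that cycle's marks; and, for each long cycle, its length $l_c$ together with an array giving $\Oh{1}$-time access to the mark of a given rank. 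As all of these store $\Oh{n/t}$ values of $\Oh{\lg n}$ bits, the index occupies $\Oh{(n/t)\lg n}$ bits.

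To answer $\pi^k(i)$ I would first walk forward from $i$ using $D$, at most $t$ times, stopping as soon as either (a) I reach a marked element $x$, identifying $i$'s cycle as long, or (b) I return to $i$, identifying a short cycle and revealing its length $l$. In case (b) I reduce the exponent modulo $l$ and finish with at most $l \le t$ further forward steps. In case (a), having walked $d<t$ steps so that $\pi^d(i)=x$, I use the stored offset $p(x)$ and length $l_c$ to compute the target offset $q=(p(x)+(k-d))\bmod l_c$; the predecessor mark of offset $q$ has rank $\lfloor q/t\rfloor$ and is retrieved in $\Oh{1}$ time as an element $y$; finally I walk forward from $y$ by $q-t\lfloor q/t\rfloor<t$ steps to reach $\pi^k(i)$. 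The query $\pi^{-k}(i)$ is identical except that the target offset is $(p(x)-(k+d))\bmod l_c$ (and $(-k)\bmod l$ for short cycles). Crucially, reducing the exponent modulo the cycle length means every step is a forward application of $\pi$, so only the $\pi()$ operation of $D$ is ever invoked. Each query performs $\Oh{t}$ calls to $D$ plus $\Oh{t}$ constant-time dictionary and arithmetic operations, for a total of $\Oh{t\,g(\pi)}$ time.

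The main obstacle is that marks are not exactly $t$ apart near the end of each cycle, so a naive ``jump $\lfloor k/t\rfloor$ marks, then walk the remainder'' strategy miscounts at the wrap-around; reducing the exponent modulo the cycle length and then locating the predecessor mark by the offset arithmetic $\lfloor q/t\rfloor$ sidesteps this, and simultaneously lets both $\pi^k$ and $\pi^{-k}$ be served by forward walks alone. The second delicate point is the space budget: a length-$n$ marking bitvector would be too large, and short cycles must be excluded from the mark set and from the per-cycle records, since otherwise a permutation with many tiny cycles would force $\Theta(n)$ marks; treating short cycles purely by a bounded forward walk keeps the number of marks, and hence the index size, at $\Oh{(n/t)\lg n}$ bits.
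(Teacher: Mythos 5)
Your proposal is correct and follows essentially the same approach as the paper's proof: decompose $\pi$ into cycles, mark every $t$th element of each long cycle, walk forward at most $t$ steps to detect a short cycle or hit a mark, then jump via the stored array to the last mark before the target and finish with at most $t$ more forward steps, so that both $\pi^k()$ and $\pi^{-k}()$ use only $\pi()$. You are somewhat more explicit than the paper about the wrap-around arithmetic and about using an indexed dictionary rather than a length-$n$ bitvector for the marks, but these are refinements of the same construction, not a different route.
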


\begin{proof}
The key to computing $i'=\pi^k(i)$ is to discover that $i$ is in a cycle of length
$\ell$ and to assign it a position $0 \le j < \ell$ within its cycle (note
$j$ is arbitrary, yet we must operate consistently once it is assigned).
Then $\pi^k(i)$ lies in the same cycle, at position 
$j' = (j+k~\textrm{mod}~\ell)$, hence $\pi^k(i) = \pi^{j'-j}(i)$ or
$\pi^{j'+\ell-j}(i)$. Thus all we need is to 
find out $j$ and $\ell$, compute $j'$, and finally find the position $i'$ in
$\pi$ that corresponds to the $j'$th element of the cycle.

We decompose $\pi$ into its cycles and, for every cycle of length at least $t$,
store the cycle's length $\ell$ and an array containing the position $i$ in 
$\pi$ of every $t$th element in the cycle. Those positions $i$ are called 
`marked'.  We also store a binary sequence $b[1..n]$, so that $b[i]=1$ iff $i$ 
is marked. For each marked element $i$ we record to which cycle $i$ belongs and 
the position $j$ of $i$ in its cycle.

To compute \(\pi^k (i)\), we repeatedly apply $\pi()$ at most $t$ times until 
we either loop or find a marked element. In the first case, we have found $\ell$,
so we can assume $j=0$, compute $j' < \ell \le t$, and apply $\pi()$ at most $t$ 
more times to find \(i'=\pi^{j'}(i)=\pi^k (i)\) in the loop. If we reach a 
marked element, instead, we have stored 
the cycle identifier to which $i$ belongs, as well as $j$ and $\ell$. Then we 
compute $j'$ and know that the previous marked position is 
$j^* = t \cdot \lfloor j'/t \rfloor$. The corresponding position $i^*$ is found
at cell $j^*/t$ of the array of positions of marked elements, and we finally
move from $i^*$ to $i'$ by applying $j'-j^* \le t$ times operation $\pi()$,
$i = \pi^{j'-j^*}(i^*)=\pi^k(i)$.
A $\pi^{- k}$ query is similar (note that it does not need to use $\pii()$ as
we can always move forward). Moreover, we can also proceed using $\pi^{-1}()$
instead of $\pi()$, whichever is faster, to compute both $\pi^k()$ and 
$\pi^{-k}()$.

The space is $\Oh{(n/t)\lg n}$ both for the samples and for a compressed
representation of bitmap $b$. Note that we only compute \rank\ at the
positions $i$ such that $b[i]=1$. Thus we can use the ID structure 
\cite{RRR02}, which uses $\Oh{(n/t)\lg t}$ bits.
\end{proof}

\subsection {Application to self-indexes}

These results on permutations apply to a second family of self-indexes, which
is based on the representation of the so-called $\Psi$ function 
\cite{GV05,GGV03,Sad03}. Given the suffix array $A[1..n]$ of sequence $s[1..n]$
over alphabet $[1..\sigma]$, $\Psi$ is defined as $\Psi(i) = 
A^{-1}[(A[i]~\textrm{mod}~n)+1]$. Counting, locating, and extracting is carried
out through permutation $\Psi$, which replaces $s$ and $A$. It is known 
\cite{GV05} that $\Psi$ contains $\sigma$ contiguous increasing runs so that 
$\HH(\runs(\Psi)) = \Ho(s)$, which allows for its compression. Grossi et 
al.~\cite{GGV03} represented $\Psi$ within $n\Hk(s)+\Oh{n}$ bits,
while supporting operation $\Psi()$ in constant time, or within $n\Hk(s)+
o(n\lg\sigma)$ while supporting $\Psi()$ in time $\Oh{\lg\sigma}$. 
By using Corollary~\ref{cor:contruns}, we can achieve the
unprecedented space $n\Ho(s) + o(n)(\Ho(s)+1)$ and support $\Psi()$ in
constant time. In addition we can support the inverse $\Psi^{-1}()$ in time 
$\Oh{\lg\lg\sigma}$. Having both $\Psi()$ and $\Psi^{-1}()$ allows for
bidirectional indexes \cite{RNOM09}, which can for example display a snippet 
around any occurrence found without the need for any extra space for sampling. 
Our construction of 
Theorem~\ref{thm:exponent} can be applied on top of any of those 
representations so as to support operation $\Psi^k()$, which is useful for
example to implement compressed suffix trees, yet the particularities of
$\Psi$ allow for sublogarithmic-time solutions \cite{GGV03}. Note also that
using Huffman-shaped wavelet trees to represent the permutation \cite{BN11} 
yields even less space, $n\Ho(s) + o(n) + \Oh{\sigma\lg n}$ bits, and the 
time complexities are relevant for not so large alphabets.

\section{Compressing functions} \label{sec:functions}

Hreinsson, Kr{\o}yer and Pagh~\cite{HKP09} recently showed how, given a
domain \(X = \{x_1, x_2, \ldots, x_n\} \subset \mathbb{N}\) of numbers that
fit in a machine word, they can represent any \(f: X \rightarrow
[1..\sigma]\) in compressed form and provide constant-time evaluation.
Let us identify function $f$ with the sequence of values
$f[1..n] = f(x_1) f(x_2) \ldots f(x_n)$. Then their representation uses
at most \((1 + \epsilon) n \Ho (f) + \Oh{n} 
+ o (\sigma)\) bits, for any constant \(\epsilon > 0\).
%
We note that this bound holds even when $\sigma$ is much larger than $n$. 

In the special case where \(X = [1..n]\) and \(\sigma = o(n)\), 
we can achieve constant-time evaluation and a better space
bound using our sequence representations. Moreover, we can support extra
functionality such as computing the pre-image of an element.
A first simple result is obtained by representing $f$ as a sequence.

\begin{lemma} \label{lem:string_function}
Let \(f: [1..n] \rightarrow [1..\sigma]\) be a function.
We can represent $f$ using $n\Ho(f) + o(n)(\Ho(f)+1)+\Oh{\sigma}$ bits and
compute $f(i)$ for any $i\in[1..n]$ in $\Oh{1}$ time, and any element of 
$f^{-1}(a)$ for any $a\in[1..\sigma]$ in time $\Oh{\lg\lg\sigma}$, or vice 
versa.  Using more space, $(1+\epsilon)\Ho(f) + o(n)$ bits for any constant 
$\epsilon>0$, we support both queries in constant time.
The size $|f^{-1}(a)|$ is always computed in $\Oh{1}$ time.
\end{lemma}
\begin{proof}
We represent sequence $f[1..n]$ using Theorem~\ref{thm:partitioning} or
Corollary~\ref{cor:partitioning-constant},
so $f(i) = f.\access(i)$ and the $j$th element of $f^{-1}(a)$ is
$f.\select_a(j)$. To compute $|f^{-1}(a)|$ in constant time we store a
binary sequence
$b = 10^{|f^{-1}(1)|}10^{|f^{-1}(2)|}1\ldots 10^{|f^{-1}(\sigma)|}1$, so
that $|f^{-1}(a)| = b.\select_1(a+1)-b.\select_1(a)-1$. The space is the one
needed to represent $s$ plus $\Oh{\sigma\lg\frac{n}{\sigma}}$ bits to 
represent $b$ using an ID 
\cite{RRR02}. This is $o(n)$ if $\sigma=o(n)$,
and otherwise it is $\Oh{\sigma}$. This extra space is also necessary
because $[1..\sigma]$ may not be the effective alphabet of sequence $f[1..n]$
(if $f$ is not surjective).
\end{proof}

Another source of compressibility frequently arising in real-life functions
is nondecreasing or nonincreasing runs. Let us start by allowing interleaved 
runs. Note that in this case $\HH(\runs(f)) \le \Ho(f)$, where 
equality is achieved if we form runs of equal values only. 

\begin{theorem} \label{thm:function}
  Let \(f: [1..n] \rightarrow [1..\sigma]\) be a 
  function such that sequence $f[1..n]$ consists of
  $\rho$ interleaved non-increasing or non-decreasing runs.
  Then, given its run decomposition,
   we can represent $f$ in \(2 n \HH (\runs (f)) + o (n)(\HH (\runs (f)) +
  1) + \Oh{\sigma}\) bits and compute $f(i)$ for any $i \in [1..n]$, and
  any element in $f^{-1}(a)$ for any $a \in [1..\sigma]$,
  in time $\Oh{\lg\lg\rho}$.
  The size $|f^{-1}(a)|$ is computed in $\Oh{1}$ time.
\end{theorem}

\begin{proof}
We store function $f$ as a combination of the permutation $\pi$ that
stably sorts the values $f(i)$, plus the binary sequence $b$ of
Lemma~\ref{lem:string_function}.
Therefore, it holds
\[f (i) = b.\rank_1 (b.\select_0 (\pii(i))).\] 
Similarly, the $j$th element of $f^{-1}(a)$ is 
\[ \pi(b.\rank_0(b.\select_1(a))+j). \]
Since $\pii$ has the same runs as $f$ (the runs in $f$ can have equal
values but those of $\pii$ cannot), we can represent $\pii$ using
Corollary~\ref{cor:runs} to obtain the claimed time and space complexities.
\end{proof}

\begin{example}
Let $f[1..9] = (1,3,2,5,4,9,8,9,8)$. The odd positions form an increasing
run $(1,2,4,8,8)$ and the even positions form $(3,5,9,9)$. The permutation
$\pi$ sorting the values is $(1,3,2,5,4,7,9,6,8)$, and its inverse is
$\pii=(1,3,2,5,4,8,6,9,7)$. The bitmap $b$ is $101010101011100100$.
\end{example}

If we consider only contiguous runs in $f$, we obtain the following result by 
representing $\pii$ with Corollary~\ref{cor:contruns}.
Note the entropy of contiguous runs is no longer upper bounded by $\Ho(f)$.

\begin{corollary} \label{cor:contiguous_function}
Let \(f: [1..n] \rightarrow [1..\sigma]\) be a function, where 
sequence $f$ consists of $\rho$ contiguous non-increasing or
non-decreasing runs. Then, given its run decomposition, we can represent $f$ 
in \(n \HH (\runs (f)) + o(n)(\HH (\runs (f)) + 1) +\Oh{\sigma}\) bits, and 
compute any $f(i)$ in $\Oh{1}$ time, as well as 
retrieve any element in $f^{-1}(a)$ in time $\Oh{\lg\lg\rho}$. 
The size $|f^{-1}(a)|$ can be computed in $\Oh{1}$ time.
\end{corollary}

In all the above results we can use Huffman-shaped wavelet trees 
\cite{BN11} to obtain an alternative space/time tradeoff. We leave the
details to the reader.

\subsection{Application to binary relations, revisited} 

Recall Section~\ref{sec:binrels}, where we represent a binary relation in
terms of a sequence $s$ and a bitmap $b$.
By instead representing $s$ as a function, we can capture another source of
compressibility, and achieve slightly different time complexities.
Note that $\Ho(s)$ corresponds to the distribution of the number $o_i$
of objects associated with a label $i$, let us call it
 $\HH_\mathrm{lab} = \Ho(s) = \sum \frac{o_i}{n}\lg\frac{n}{o_i}$.
On the other hand, if we regard the contiguous increasing runs of $s$, the 
entropy corresponds to the distribution of the number $l_i$ of labels 
associated with an object $i$, let us call it
$\HH_\mathrm{obj} = \HH(\runs(s)) = \sum \frac{l_i}{n}\lg\frac{n}{l_i}$.

While Section~\ref{sec:binrels} compresses $B$ in terms of $\HH_\mathrm{lab} =
\Ho(s)$, we can use Corollary~\ref{cor:contiguous_function} to achieve
$n\HH_\mathrm{obj} + o(n)(\HH_\mathrm{obj}+1) + \Oh{\kappa + \lambda}$ 
bits of space. Since $f.\access(i) = f(i)$ and $f.\select_a(j)$
is the $j$th element of $f^{-1}(a)$, this representation solves 
{\tt label\_nb}, {\tt object\_nb} and {\tt object\_select} in constant time, 
and {\tt label\_select} and {\tt object\_rank} in time $\Oh{\lg\lg\lambda}$.
Operations {\tt label\_rank} and {\tt table\_access} require $f.\rank$,
which is not directly supported. The former can be solved in time
$\Oh{\lg\lg\lambda \lg\lg\kappa}$ as a predecessor search in $\pi$
(storing absolute samples every $\lg^2\kappa$ positions), and the 
latter in time $\Oh{\lg\lg\lambda}$ as the difference between two 
{\tt object\_rank} queries.
We can also achieve $n\HH_\mathrm{obj} + o(n) + \Oh{\kappa+\lambda}$
bits using Huffman-shaped wavelet trees; we leave the details to the reader.

\section{Compressing dynamic collections of disjoint sets} \label{sec:unionfind}

Finally, we now give what is, to the best of our knowledge, the first result 
about storing a compressed collection of disjoint sets while supporting 
operations \union\ and \find~\cite{TvL84}.  The key point in the next theorem 
is that, as the sets in the collection $C$ are merged, our space bound shrinks 
with the zero-order entropy of the distribution of the function $s$ that 
assigns elements to sets in $C$. We define $\HH(C) = 
\sum\frac{n_i}{n}\lg\frac{n}{n_i} \le \lg |C|$,
where $n_i$ are the sizes of the sets, which add up to $n$.

\begin{theorem} \label{thm:disjoint}
  Let $C$ be a collection of disjoint sets whose union is \([1..n]\).
  For any \(\epsilon > 0\), we can store $C$ in \((1 +
  \epsilon) n \HH (C) + \Oh{|C| \lg n} + o (n) \)
  bits and perform any sequence of $r$ \union\ and \find\
  operations in $\Oh{r\alpha(n) + (1/\epsilon)n\lg \lg n}$ total time,
  where $\alpha(n)$ is the inverse Ackermann's function.
\end{theorem}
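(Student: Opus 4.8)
The plan is to reduce the problem to maintaining a compressed function under a very restricted kind of update, and to absorb the dynamism by periodic reconstruction. I would represent the collection $C$ by the string $s[1..n]$ in which $s[i]$ is the label of the set currently containing element $i$, stored with Theorem~\ref{thm:partitioning} in the constant-time-\access\ variant $(ii)$; note that $H_0(s)=H(\sets(C))$, so a static snapshot already costs $nH(\sets(C))+o(n)(H(\sets(C))+1)$ bits. On top of $s$ I would keep a classical union--find over the label universe, augmented with the size of each set (to union by size) and one representative per set. Then $\find(x)$ is $s.\access(x)$ followed by a representative lookup, and $\union(x,y)$ performs two $\find$'s and merges the two labels in the auxiliary structure, decreasing $|C|$ by one. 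Only \access\ is ever needed on $s$, which is exactly why variant $(ii)$ is the right choice: it keeps both a query and a full scan of $s$ cheap.

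The structural fact I would lean on is that no operation can increase $H(\sets(C))$: merging two classes of a distribution can only lower its entropy, so $H(\sets(C))$ is monotonically non-increasing along any sequence of operations. I would combine this with the convexity inequality highlighted at the start of Section~\ref{sec:partitioning}, namely $nH(\sets(C))\ge(|C|-1)\lg n$, which shows that the $|C|$ labels together with their sizes and representatives fit in $\Oh{|C|\lg n}$ bits, and moreover that this term is dominated by $nH(\sets(C))$ whenever the latter is not already negligible. This is the inequality that lets me charge the auxiliary label data against the main entropy term rather than against a coarse $|C|\lg\lg n$ bound.

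Dynamism is handled by rebuilding. I would record the entropy $H_0$ of the snapshot produced at the last reconstruction and maintain the current entropy $H(\sets(C))$ incrementally in $\Oh{1}$ time per $\union$ (each merge alters only two size-terms). When the current entropy has dropped to $H_0/(1+\epsilon)$, I rebuild from scratch: read the current label of every element through \access\ and the auxiliary structure, relabel the surviving sets to $[1..|C|]$, build a fresh Theorem~\ref{thm:partitioning} structure, and reset the auxiliary union--find. Because between two reconstructions we always have $H_0\le(1+\epsilon)H(\sets(C))$, the stored snapshot never exceeds $(1+\epsilon)nH(\sets(C))+o(n)(H(\sets(C))+1)$ bits; and once the entropy falls below $1/\lg n$ the whole structure already occupies $o(n)$ bits, at which point I simply stop reconstructing. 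Each $\find$ and $\union$ then costs $\Oh{1}$ for the \access\ plus the (near-constant, hence comfortably $\Oh{\lg\lg n}$) amortized cost of a union--find step, giving $\Oh{m\lg\lg n}$ over all $m$ operations.

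The part I expect to be hardest is the amortized accounting of the reconstructions, i.e.\ proving that their total cost is only $\Oh{(1/\epsilon)\,n\lg\lg n}$. Since a single rebuild is a linear scan plus a linear-time construction, costing $\Oh{n}$, I must bound the \emph{number} of rebuilds by $\Oh{(1/\epsilon)\lg\lg n}$. This should follow from monotonicity: the triggering entropies form a geometrically decreasing sequence of ratio $(1+\epsilon)$ that lies entirely in the window $[1/\lg n,\lg n]$ (below $1/\lg n$ I have stopped), whose multiplicative width is only $\lg^2 n$, so the count is $\log_{1+\epsilon}(\lg^2 n)=\Oh{(1/\epsilon)\lg\lg n}$. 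The genuinely delicate point, and the one I would scrutinise most, is keeping the auxiliary label structure within $\Oh{|C|\lg n}$ throughout \emph{without} letting the growth of the stale label universe between rebuilds force additional (space-driven) reconstructions, since those would immediately break the time bound; here I would rely on the convexity inequality to charge the at most $\sigma_0$ live labels against $nH_0\le(1+\epsilon)nH(\sets(C))$, so that the entropy trigger alone suffices and no separate $|C|$-driven rebuild is ever required.
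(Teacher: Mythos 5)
Your proposal matches the paper's own proof in all essentials: the string $s$ of set identifiers stored via Theorem~\ref{thm:partitioning}$(ii)$, a standard union--find structure over the set labels with representatives, rebuilding whenever $H(\sets(C))$ drops by a $(1+\epsilon)$ factor, and bounding the number of rebuilds by $\Oh{(1/\epsilon)\lg\lg n}$ before stopping once the entropy term is dominated by the $o(n)$ redundancy. Your extra details --- the $\Oh{1}$ incremental maintenance of the entropy and the explicit $[1/\lg n,\lg n]$ window for the rebuild count --- only make explicit what the paper defers to its full version.
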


\begin{proof}
We first use
Theorem~\ref{thm:partitioning} to store the sequence \(s [1..n]\) in which
\(s [i]\) is the representative of the set containing $i$.  We then store the 
representatives in a standard disjoint-set data structure $D$~\cite{TvL84}. 
Since $\Ho(s) = \HH(C)$, our data
structures take \(n \HH (C) + o (n)(\HH (C) + 1) + \Oh{|C| \lg n}\)
bits.  We can perform a query \(\find (i)\) on $C$ by performing
\(D.\find (s [i])\), and perform a \(\union (i, j)\)
operation on $C$ by performing \(D.\union(D.\find(s[i]), D.\find(s[j]))\).

As we only need \access\ functionality on $s$, we use a simple variant of
Theorem~\ref{thm:partitioning}. We support only \rank\ and \select\ on 
the multiary wavelet tree that represents sequence $t$, and store the $s_\ell$
subsequences as plain arrays. The mapping $\mapping$ is of length $|C|$, so it
can easily be represented in plain form to support constant-time operations,
within $\Oh{|C|\lg n}$ bits. This yields constant time $\access$, and therefore
the cost of the $r$ \union\ and \find\ operations is $\Oh{r \alpha(n)}$
\cite{TvL84}.

For our data structure to shrink as we merge sets, we keep
track of \(\HH (C)\) and, whenever it shrinks by a factor of \(1 +
\epsilon\), we rebuild our entire data structure on the updated values
$s[i] \leftarrow \find(s[i])$. First, note that all those \find\ operations
take $\Oh{n}$ time because of path-compression~\cite{TvL84}: Only the first
time one accesses a node $v\in D$ it may occur that the representative is not
directly $v$'s parent. Thus the overall time can be split into $\Oh{n}$ time
for the $n$ instructions $\find(s[i])$ plus $\Oh{n}$ for the $n$ times a node
$v \in D$ is visited for the first time.

Reconstructing the structure of Theorem~\ref{thm:partitioning} also takes 
$\Oh{n}$ time. The plain structures for $\mapping$ and $s_\ell$ are easily
built in linear time, and so is the multiary wavelet tree supporting \rank\ and
\access~\cite{FMMN07}, as it requires just tables of sampled counters.

Since \(\HH (C)\) is always less than \(\lg n\), we rebuild only 
$\Oh{\lg_{1+\epsilon} \lg n}=\Oh{(1/\epsilon)\lg\lg n}$ times. Thus the
overall cost of rebuilding is $\Oh{(1/\epsilon)n\lg\lg n}$. This completes
our time complexity.

Finally, the space term $o(n)\HH(C)$ is
absorbed by $\epsilon \HH(C)$ by slightly adjusting $\epsilon$, and
this gives our final space formula.
\end{proof}

\section{Experimental results}
\label{sec:exper}

In this section we explore the performance of our structure in practice. We 
first introduce, in Section~\ref{sec:dense}, a simpler and more practical 
alphabet partitioning scheme we call ``dense'', which experimentally performs 
better than the 
one we describe in Section~\ref{sec:partitioning}, but on which we could not 
prove useful space bounds. Next, in Section~\ref{sec:exp-compr} we study the 
performance of both alphabet partitioning methods, as well as the optimal one 
\cite{Sai05}, in terms of compression ratio and decompression performance. 
Given the results of these experiments, we continue only with our dense
partitioning for the rest of the section.

In Section~\ref{sec:exp-seqs} we compare our new sequence representation
with the state of the art, considering the tradeoff between space and time
of operations $\rank$, $\select$, and $\access$. Then,
Sections~\ref{sec:exp-invl}, \ref{sec:exp-ssa}, and \ref{sec:exp-graph} 
compare the same data structures on different real-life applications of 
sequence representations. In the first, the operations are used to emulate an
inverted index on the compressed sequence using (almost) no
extra space. In the second, they are used to emulate self-indexes for text
\cite{NM07}. In the third, they provide access to direct and reverse neighbors 
on graphs represented with adjacency lists.

The machine used for the experiments has an
Intel\textsuperscript{\textregistered}
Xeon\textsuperscript{\textregistered} E5620 at $2.40$GHz, $94$GB of
RAM. We did not use multithreading in our implementations; times
are measured using only one thread, and in RAM.
The operating system is Ubuntu 10.04, with
kernel 2.6.32-33-server.x86\_64. The code was compiled using GNU/GCC
version 4.4.3 with optimization flags \verb|-O9|.

Our code is available in \libcds\ version $1.0.10$, downloadable
from \verb|http://libcds.recoded.cl/|.

\subsection{Dense alphabet partitioning}
\label{sec:dense}

Said \cite{Sai05} proved that an optimal assignment to sub-alphabets must
group consecutive symbols once sorted by frequency. A simple alternative
to the partitioning scheme presented in Section~\ref{sec:partitioning}, and
that follows this optimality principle, is to make mapping $\mapping$ 
group elements into consecutive chunks of doubling size, that
is, $\mapping[a] = \lfloor \lg r(a) \rfloor$, where $r(a)$ is the
rank of $a$ according to its frequency. 
The rest of the scheme to define $t[1..n]$ and the sequences
$s_\ell[1..\sigma_\ell]$ is as in Section~\ref{sec:partitioning}. The
classes are in the range $0 \le \ell \le \lfloor \lg\sigma \rfloor$,
and each element in $s_\ell$ is encoded in $\ell$ bits. As we use all the
available bits of each symbol in sequences $s_\ell$ (except possibly in the
last one), we call this scheme {\em dense}.

We show that this scheme is not much worse than the one proposed in
Section~\ref{sec:partitioning} (which will be called {\em sparse}). 
First consider the total number of bits we use to encode the sequences 
$s_\ell$, $\sum_\ell \ell |s_\ell| = \sum_a |s|_a \lfloor \lg r(a) \rfloor$.
Since $|s|_a \le n / r(a)$ because the symbols are sorted by decreasing
frequency, it holds that $r(a) \le n / |s|_a$ and 
$\sum |s|_a \lfloor \lg r(a) \rfloor \le \sum |s|_a \lg (n / |s|_a) = n\Ho(s)$.
Now consider the number of bits we use to encode 
$t = \lfloor \lg r(s[1]) \rfloor , \ldots, \lfloor \lg r(s[n])) \rfloor$.  
We could store each element $\lfloor \lg r(s[i]) \rfloor$ of $t$ in 
$2 \lfloor \lg (\lfloor \lg r(s[i]) \rfloor + 1) \rfloor - 1$ bits using
$\gamma$-codes \cite{WMB99}, and such encoding would be lower bounded by 
$n\Ho(t)$. Thus
$   n\Ho(t) 
\le 2 \sum_i \lg\lg (r(s[i])+1)
\le 2 \sum_a |s|_a \lg\lg (n/|s|_a+1)
= O(n(\lg\Ho(s) + 1)
= o(n H_0 (s)) + O(n)$
(recall Section~\ref{sec:repr}). It follows that the total encoding length is
$n\Ho(s) + \Oh{n\lg\Ho(s)} = n\Ho(s)+o(n\Ho(s)) + \Oh{n}$ bits.

Apart from the pretty tight upper bound, it is not evident whether this scheme 
is more or less efficient than the sparse encoding. Certainly the dense scheme 
uses the least possible number of classes (which could allow storing $t$ in 
plain form using $\lg\lg\sigma$ bits per symbol). On the other hand, the sparse
method uses in general more classes, which allows for smaller sub-alphabets 
using fewer bits per symbol in sequences $s_\ell$. As we will see in 
Section~\ref{sec:exp-compr}, the dense scheme uses less space than the sparse 
one for $t$, but more for the sequences $s_\ell$.

\begin{example} \label{ex:dense}
Consider the same sequence $s = \texttt{"alabar a la alabarda"}$ of 
Ex.~\ref{ex:1}. The dense partitioning will assign $\mapping[\mathtt{a}] = 0$,
$\mapping[\mathtt{l}] = \mapping[\mathtt{'\ '}] = 1$,
$\mapping[\mathtt{b}] = \mapping[\mathtt{r}] = \mapping[\mathtt{d}] = 2$. 
So the sequence of sub-alphabet identifiers is $t[1..20] =
(\mathtt{0,1,0,2,0,2,1,0,1,1,0,1,0,1,0,2,0,2,2,0})$, and the
subsequences are $s_0 = (\mathtt{1,1,1,1,1,1,1,1,1})$, 
$s_1 = (\mathtt{2,1,1,2,1,2})$, and $s_2 = (\mathtt{1,3,1,3,2})$.

This dense scheme uses 16 bits for the sequences $s_\ell$, and the
zero-order compressed $t$ requires $n\Ho(t) = 30.79$ bits. The overall
compression is 2.34 bits per symbol. The sparse partitioning of 
Ex.~\ref{ex:1} used 10 bits in the sequences $s_\ell$, and the zero-order
compressed $t$ required $n\Ho(t) = 34.40$ bits. The total gives
2.22 bits per symbol. In our real applications, the dense partitioning
performs better.
\end{example}

Note that the question of space optimality is elusive in this scenario.
Since the encoding in $t$ plus that in the corresponding sequence $s_\ell$
forms a unique code per symbol, the optimum is reached when we choose one
sub-alphabet per symbol, so that the sequences $s_\ell$ require zero bits
and all the space is in $n\Ho(t) = n\Ho(s)$. The alphabet partitioning
always gives away some space, in exchange for faster decompression (or, in 
our case, faster $\rank$/$\select$/$\access$ operations).

Said's optimal partitioning \cite{Sai05} takes care of this problem by using
a parameter $k$ that is the maximum number of sub-alphabets to use. 
We sort the alphabet by decreasing frequency and call $S(c,k)$ the total number
of bits required to encode the symbols $[c..\sigma]$ of the alphabet using 
a partitioning into at most $k$ sub-alphabets. In general, we can make a
sub-alphabet with the symbols $[c..c']$ and solve optimally the rest, but if
$k=1$ we are forced to choose $c'=\sigma$. When we can choose, the optimization
formula is as follows: 
\[  S(c,k) = \min_{c \le c' \le \sigma} \left( f \lg\frac{n}{f} + 
				      f \lceil \lg (c'-c+1) \rceil +
				      S(c'+1,k-1) \right),
\]
where $f$ is the total frequency of symbols $c$-th to $c'$-th in $s$. The first 
term of the sum accounts for the increase in $t\Ho(s)$, the second for
the size in bits of the new sequence $s_\ell$, and the third for the smaller
subproblem, where it also holds $S(\sigma+1,k)=0$ for any $k$. This dynamic 
programming algorithm requires $\Oh{k\sigma}$ space and $\Oh{\sigma^2}$ time.
We call this partitioning method {\em optimal}.

\begin{example}
The optimal partitioning using 3 classes just like the dense approach in
Ex.~\ref{ex:dense} leaves \texttt{'a'} in its own class, then groups
\texttt{'l'}, \texttt{' '}, \texttt{'b'} and \texttt{'r'} in a second class,
and finally leaves \texttt{'d'} alone in a third class. The overall space
$n\Ho(t) + \sum |s_\ell| \lceil \lg \sigma_\ell \rceil$ is 2.23 bits per
symbol, less than the 2.34 reached by the dense partitioning. If, instead,
we let it use four classes, it gives the same solution as the sparse method in 
Ex.~\ref{ex:1}.
\end{example}

Finally, in Section~\ref{sec:partitioning} we represent the sequences 
$s_\ell$ with small alphabets $\sigma_\ell$ using wavelet trees (just 
like $t$) instead of using the representation of Golynski et al.~\cite{GMR06}, 
which is used for large $\sigma_\ell > \lg n$. In theory, this is because
Golynski et al.'s representation does not ensure sublinearity on smaller
alphabets when used inside our scheme. While this may appear to be a theoretical
issue, the implementation of such data structure (e.g., in \libcds) is indeed 
unattractive for small alphabets. For this reason, we also avoid using it on 
the chunks where $\sigma_\ell$ is small (in our case, the first ones). Note 
that using a wavelet tree for $t$ and then another for the symbols in a 
sequence $s_\ell$ is equivalent to replacing the wavelet tree leaf 
corresponding to $\ell$ in $t$ by the whole wavelet tree of $s_\ell$. The
space used by such an arrangement is worse than the one obtained by building,
from scratch, a wavelet tree for $t$ where the symbols $t[i]=\ell$ are
actually replaced by the corresponding symbol $s[i]$.

In our {\em dense} representation we use a parameter $\ell_\textrm{min}$ that
controls the minimum $\ell$ value that is represented outside of $t$. All 
the symbols that would belong to $s_\ell$, for $\ell < \ell_\textrm{min}$, are
represented directly in $t$. Note that, by default, since $\sigma_0=1$,
we have $\ell_\textrm{min}=1$.

\subsection{Compression performance}
\label{sec:exp-compr}

For all the experiments in Section~\ref{sec:exper}, except 
Section~\ref{sec:exp-graph}, we used real datasets extracted from Wikipedia. 
We considered two large collections, Simple English and Spanish, dated from
06/06/2011 and 03/02/2010, respectively. Both are regarded as sequences of
words, not characters. These collections contain several versions of each
article. Simple English, in addition, uses a reduced vocabulary. We collected 
a sample of $100{,}000$ versions at random from all the documents of Simple
English, which makes a long and repetitive sequence over a small alphabet. 
For the Spanish collection, which features a much richer vocabulary, we took 
the oldest version of each article, which yields a sequence of similar
size, but with a much larger alphabet.

We generated a single sequence containing the word identifiers of all the
articles concatenated, obtained after stemming the collections using
Porter for English and Snowball for Spanish. Table \ref{tab:data}
shows some basic characteristics of the sequences obtained%
\footnote{The code for generating these sequences is available at
\texttt{https://github.com/fclaude/txtinvlists}.}.

\begin{table}
\begin{center}
\begin{tabular}{l|r|r|r|r}
Collection & Articles & Total words $(n)$ & Distinct words $(\sigma)$
& Entropy ($\Ho(s)$) \\ \hline
{\tt Simple English} & $100{,}000$ & $766{,}968{,}140$ & $664{,}194$ & $11.60$ \\
{\tt Spanish} & $1{,}590{,}453$ & $511{,}173{,}618$ & $3{,}210{,}671$ & $11.37$ 
\end{tabular}
\caption{Main characteristics of the datasets used.}
\label{tab:data}
\end{center}
\end{table}

We measured the compression ratio achieved by the three partitioning
schemes, {\tt dense}, {\tt sparse}, and {\tt optimal}. For {\tt dense} we
did not include any individual symbols (other than the most frequent) in
sequence $t$, i.e., we let $\ell_\textrm{min}=1$. For {\tt optimal} we 
allow $1+\lfloor \lg\sigma \rfloor$ sub-alphabets, just like {\tt dense}.

In all cases, the symbols in each $s_\ell$ are represented using $\lceil \lg
\sigma_\ell \rceil$ bits. The sequence of classes $t$, instead, is represented
in three different forms: {\tt Plain} uses a fixed number of bits per symbol,
$\lceil \lg \ell \rceil$ where $\ell$ is the maximum class; {\tt Huff} uses
Huffman coding of the symbols\footnote{We use G. Navarro's Huffman
implementation; the code is available in {\sc Libcds}.}, and {\tt AC} uses
Arithmetic coding of the symbols\footnote{We use the code by J. Carpinelli,
A. Moffat, R. Neal, W. Salamonsen, L. Stuiver, A. Turpin
and I. Witten, available at 
{\tt http://ww2.cs.mu.oz.au/$\sim$alistair/arith\_coder/arith\_coder-3.tar.gz}.
We modified the decompressor to read the whole
stream before timing decompression.}. The former encodings are faster,
whereas the latter use less space. In addition we consider compressing the
original sequences using Huffman ({\tt Huffman}) and Arithmetic coding ({\tt
Arith}).

\begin{figure}[tb]
\centerline{%
\includegraphics[width=0.49\textwidth]{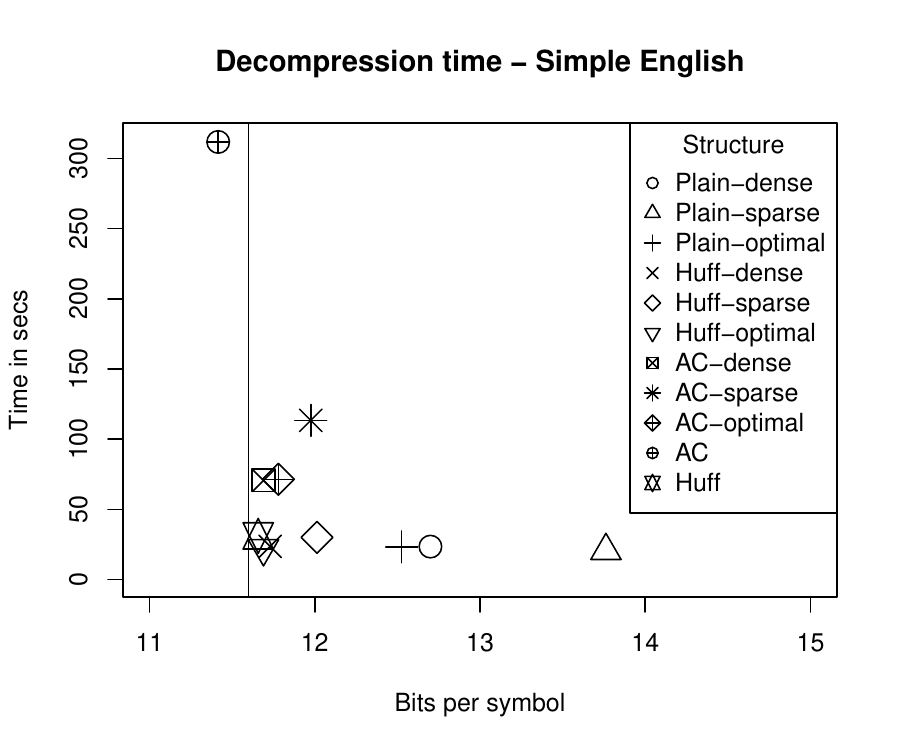}
\includegraphics[width=0.49\textwidth]{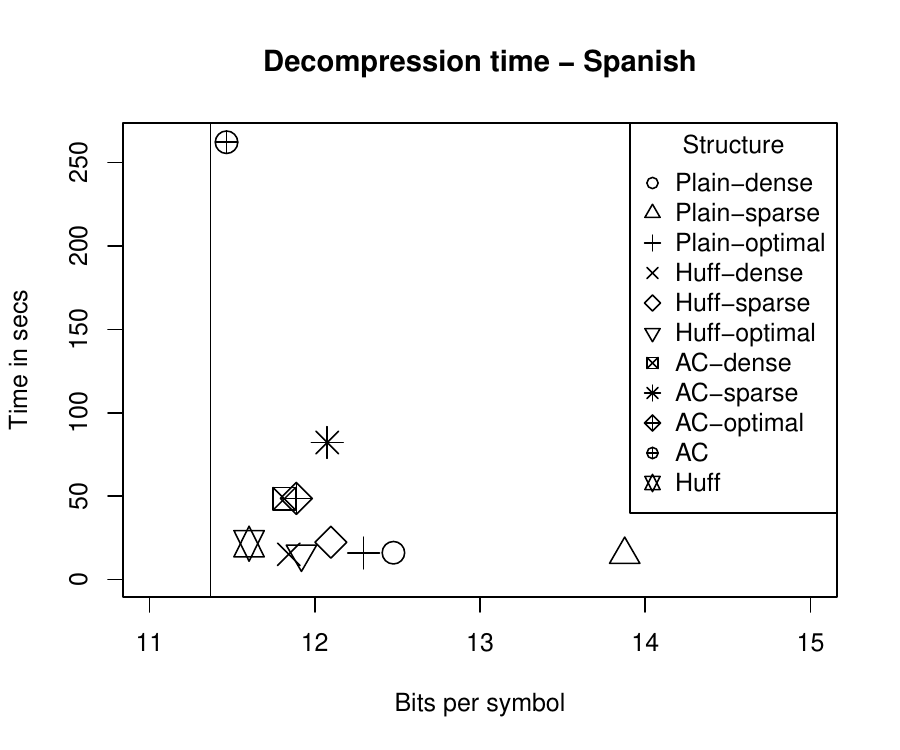}}
\caption{Space versus decompression time for basic and alphabet-partitioned
schemes. The vertical line marks the zero-order entropy of the sequences. AC
can slightly break the entropy barrier on Simple English because it is 
adaptive.}
\label{fig:spacetime}
\end{figure}

As explained, the main interest in using alphabet partitioning in a 
compressor is to speed up decompression without sacrificing too much space.
Figure~\ref{fig:spacetime} compares all these alternatives in terms of space
usage (percentage of the original sequence) and decompression time per symbol. 
It can be seen that alphabet partitioning combined with {\tt AC} compression
of $t$ wastes almost no space due to the partitioning, and speeds up 
considerably the decompression of the bare {\tt AC} compression.
However, bare {\tt Huffman} also uses the same size and decompresses several
times faster. Therefore, alphabet partitioning combined with {\tt AC}
compression is not really interesting. The other extreme is the combination
with a {\tt Plain} encoding of $t$. In the best combinations, this alphabet
partitioning wastes close to 10\% of space, and in exchange decompresses
around 30\% faster than bare {\tt Huffman}. The intermediate combination,
{\tt Huff}, wastes less than 1\% of space, while improving decompression time
by almost 25\% over bare {\tt Huffman}.

Another interesting comparison is that of partitioning methods. In all cases,
variant {\tt dense} performs better than {\tt sparse}. The difference is larger
when combined with {\tt Plain}, where {\tt sparse} is penalized for the larger
alphabet size of $t$, but still there is a small difference when combined with
{\tt AC}, which shows that $t$ has also (slightly) lower entropy in variant 
{\tt dense}. 
Table~\ref{tab:breakdown} gives a breakdown of the bits per symbol in $t$
versus the sequences $s_\ell$ in all the methods. It can be seen that 
{\tt sparse} leaves much more information on sequence $t$ than the 
alternatives, which makes it less appealing since the operation of $t$ is 
slower than that of the other sequences. However, this can be counterweighted 
by the fact that {\tt sparse} produces many more sequences with alphabet size 
1, which need no time for accessing. It is also confirmed that {\tt dense} 
leaves slightly less information on $t$ than {\tt optimal}, and that the 
difference in space between the three alternatives is almost negligible 
(unless we use {\tt Plain} to encode $t$, which is not interesting).

\begin{table}[tb]
\begin{center}
\begin{tabular}{l|rrr|rrr}
Combination & \multicolumn{3}{c|}{Simple English} & \multicolumn{3}{c}{Spanish}  \\
	    & $t$~~ & $s_\ell$~~ & $\%$~~ & $t$~~ & $s_\ell$~~ & $\%$~~ \\
\hline
{\tt Plain-dense}    & $4.96$ & $ 7.74$ & $ 6.67$ & $ 4.76$ & $ 7.72$ & $ 11.70$ \\
{\tt Plain-sparse}	& $9.97$ & $ 3.79$ & $ 19.01$ & $9.79$ & $ 4.08$ & $ 32.91$ \\
{\tt Plain-optimal}	& $4.96$ & $ 7.66$ & $ 10.21$ & $4.76$ & $ 7.61$ & $ 16.88$ \\
{\tt Huff-dense}	  & $3.99$ & $ 7.74$ & $ 6.67$ & $ 4.13$ & $ 7.72$ & $ 11.70$ \\
{\tt Huff-sparse}	  & $8.22$ & $ 3.79$ & $ 19.01$ & $8.01$ & $ 4.08$ & $ 32.91$ \\
{\tt Huff-optimal}	& $4.08$ & $ 7.66$ & $ 10.21$ & $4.24$ & $ 7.61$ & $ 16.88$ \\
{\tt AC-dense}		  & $3.95$ & $ 7.74$ & $ 6.67$ & $ 4.10$ & $ 7.72$ & $ 11.70$ \\
{\tt AC-sparse}		  & $8.18$ & $ 3.79$ & $ 19.01$ & $7.99$ & $ 4.08$ & $ 32.91$ \\
{\tt AC-optimal}	  & $4.03$ & $ 7.66$ & $ 10.21$ & $4.20$ & $ 7.61$ & $ 16.88$ \\
\end{tabular}
\caption{Breakdown, in bits per symbol, of the space used in sequence $t$
versus the space used in all the sequences $s_\ell$, for the different
combinations. The third column in each collection is the percentage of symbols
that lie in sequences $s_\ell$ with alphabet sizes $\sigma_\ell=1$.}
\label{tab:breakdown}
\end{center}
\end{table}

Finally, let us consider how the partitioning method affects decompression 
time, given an encoding method for $t$. For method {\tt AC}, {\tt sparse} is
significantly slower. This is explained by the $t$ component having many
more bits, and the decompression time being dominated by the processing of $t$
by the (very slow) arithmetic decoder. For method {\tt Plain}, instead, {\tt 
sparse} is slightly faster, despite the fact that it uses more space. Since now 
the reads on $t$ and $s_\ell$ take about the same time, this difference is 
attributable to the fact that {\tt sparse} leaves more symbols on sequences 
$s_\ell$ with alphabets of size 1, where only one read in $t$ is needed to 
decode the symbol (see Table~\ref{tab:breakdown}).
For {\tt Huff} all the times are very similar, and very close to the fastest
one. Therefore, for the rest of the experiments we use the variant {\tt Huff}
with {\tt dense} partitioning, which performs best in space/time.

\subsection{Rank, select and access}
\label{sec:exp-seqs}

We now consider the efficiency in the support for the operations \rank,
\select, and \access. We compare our sequence representation with the
state of the art, as implemented in \libcds\ v1.0.10, a library of
highly optimized implementations of compact data structures. As said,
\libcds\ already includes the implementation of our new structure.

We compare six data structures for representing sequences. Those based on 
wavelet trees are obtained in \libcds\ by combining sequence representations 
(\verb|WaveletTreeNoptrs|, \verb|WaveletTree|) with bitmap representations
(\verb|BitSequenceRG|, \verb|BitSequenceRRR|) for the data on wavelet tree 
nodes.  

\begin{itemize}
\item \verb|WTNPRG|: Wavelet tree without pointers, obtained as
  \verb|WaveletTreeNoptrs|$+$\verb|BitSequenceRG| in \libcds. 
  This corresponds to the 
  basic balanced wavelet tree structure \cite{GGV03}, where all the bitmaps 
  of a level are concatenated \cite{MN07}. The bitmaps are represented in 
  plain form and their operations are implemented using a one-level directory 
  \cite{GGMN05} (where \rank\ is implemented in time proportional to a
  sampling step and \select\ uses a binary search on \rank). The space is
  $n\lg\sigma + o(n\lg\sigma)$ and the times are $\Oh{\lg\sigma}$. In
  practice the absence of pointers yields a larger number of operations to
  navigate in the wavelet tree, and also \select\ operation on bitmaps is
  much costlier than \rank. A space/time tradeoff is obtained by varying
  the sampling step of the bitmap \rank\ directories.
\item \verb|WTNPRRR|: Wavelet tree without pointers with bitmap compression, 
  obtained in \libcds\ as \verb|WaveletTreeNoptrs|$+$\verb|BitSequenceRRR|. 
  This is similar to \verb|WTNPRG|, but the bitmaps are represented in
  compressed form using the FID technique \cite{RRR02} (\select\ is also
  implemented with binary search on \rank). The space is
  $n\Ho(s) + o(n\lg\sigma)$ and the times are $\Oh{\lg\sigma}$. In practice
  the FID representation makes it considerably slower than the version with
  plain bitmaps, yet \select\ operation is less affected.
  A space/time tradeoff is obtained by varying the sampling
  step of the bitmap \rank\ directories.
\item \verb|GMR|: The representation proposed by Golysnki et al.~\cite{GMR06}, 
  named \verb|SequenceGMR| in \libcds. The space is $n\lg\sigma+o(n\lg\sigma)$,
  yet the lower-order term is sublinear on $\sigma$, not $n$. The time is
  $\Oh{1}$ for \select\ and $\Oh{\lg\lg\sigma}$ for \rank\ and \access, although
  on average \rank\ is constant-time. A space/time tradeoff, which in practice
  affects only the time for \access, is obtained by varying the permutation 
  sampling inside the chunks \cite{GMR06}.
\item \verb|WTRG|: Wavelet tree with pointers and Huffman shape, obtained as
  \verb|WaveletTree|$+$\verb|BitSequenceRG| in \libcds. 
  The space is $n\Ho(s) + \Oh{n} + o(n\Ho(s)) + \Oh{\sigma\lg n}$. The time is 
  $\Oh{\lg\sigma}$, but in our experiments it will be $\Oh{\Ho(s)}$ for \access,
  since the positions are chosen at random from the sequence and then we 
  navigate less frequently to deeper Huffman leaves.
\item \verb|WTRRR|: Wavelet tree with pointers, obtained with
  \verb|WaveletTree|+ \verb|BitSequenceRRR| in \libcds. The space is
  $n\Ho(s) + o(n\Ho(s)) + \Oh{\sigma\lg n}$. The time is as in the previous
  structure, except that in practice the FID representation is considerably
  slower.
\item \verb|AP|: Our new alphabet partitioned structure, named 
  \verb|SequenceAlphPart| in \libcds. We use dense partitioning and include
  the $2^{10}$ most frequent symbols directly in $t$, $\ell_\textrm{min}=10$. 
  Sequence $t$ is represented with a \verb|WTRG| (since its alphabet is small 
  and the pointers pose no significant overhead), and the sequences 
  $\sigma_\ell$ are represented with structures \verb|GMR|. The space is 
  $n\Ho(s)+o(n\Ho(s))$, although the lower-order term is actually sublinear on 
  $\sigma$ (and only
  very slightly on $n$). The times are as in \verb|GMR|, although there is a
  small additive overhead due to the wavelet tree on $t$. A space/time tradeoff
  is obtained with the permutations sampling, just as in \verb|GMR|.
\end{itemize}

Figure \ref{fig:opers} shows the results obtained for both text collections,
giving the average over $100{,}000$ measures. The \rank\ queries were generated
by choosing a symbol from $[1..\sigma]$ and a position from $[1..n]$, both
uniformly at random. For \select\ we chose the symbol $a$ in the same way, and
the other argument uniformly at random in $[1..|s|_a]$. Finally, for \access\ 
we generated the position uniformly at random in $[1..n]$. Note that the latter
choice favors Huffman-shaped wavelet trees, on which we descend to leaf $a$ with
probability $|s|_a/n$, whereas for \rank\ and \select\ we descend to any leaf
with the same probability.

\begin{figure}
\centerline{%
\includegraphics[height=0.3\textheight,width=0.49\textwidth]{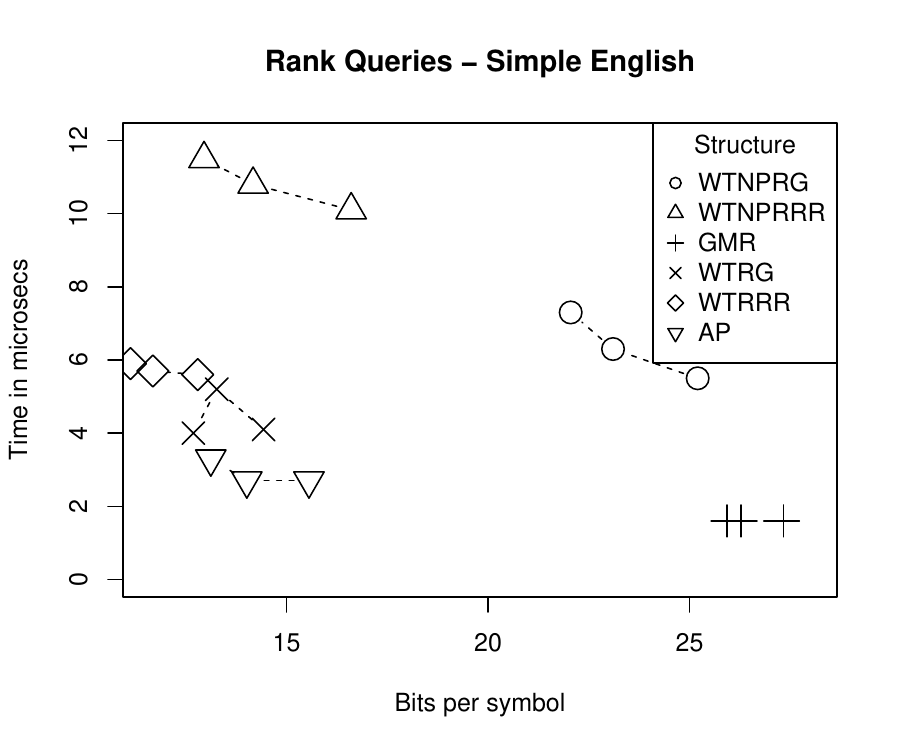}
\includegraphics[height=0.3\textheight,width=0.49\textwidth]{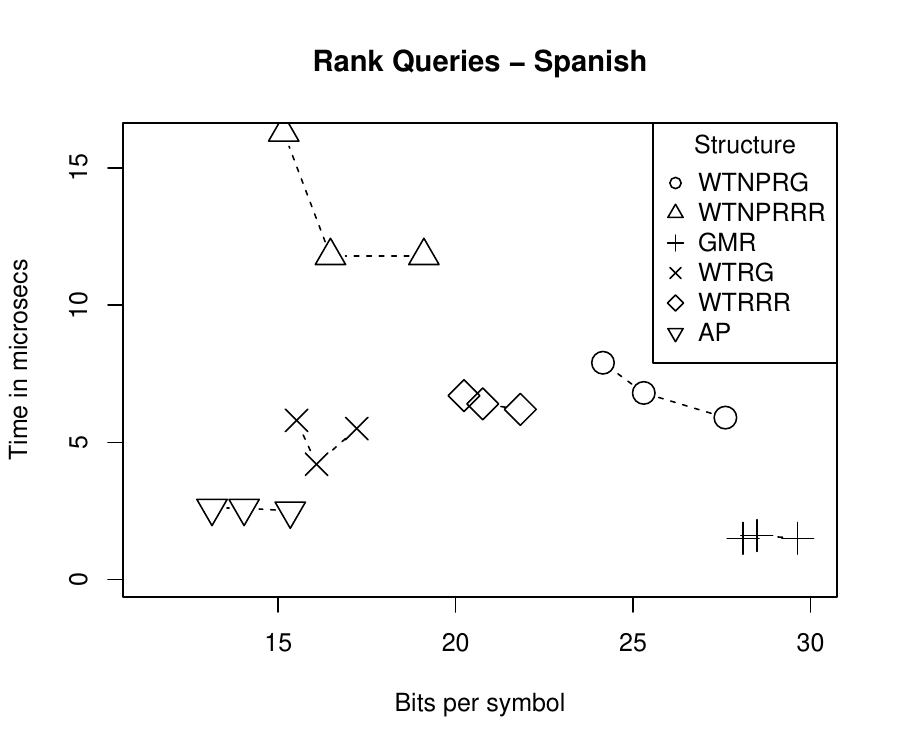}}

\centerline{%
\includegraphics[height=0.3\textheight,width=0.49\textwidth]{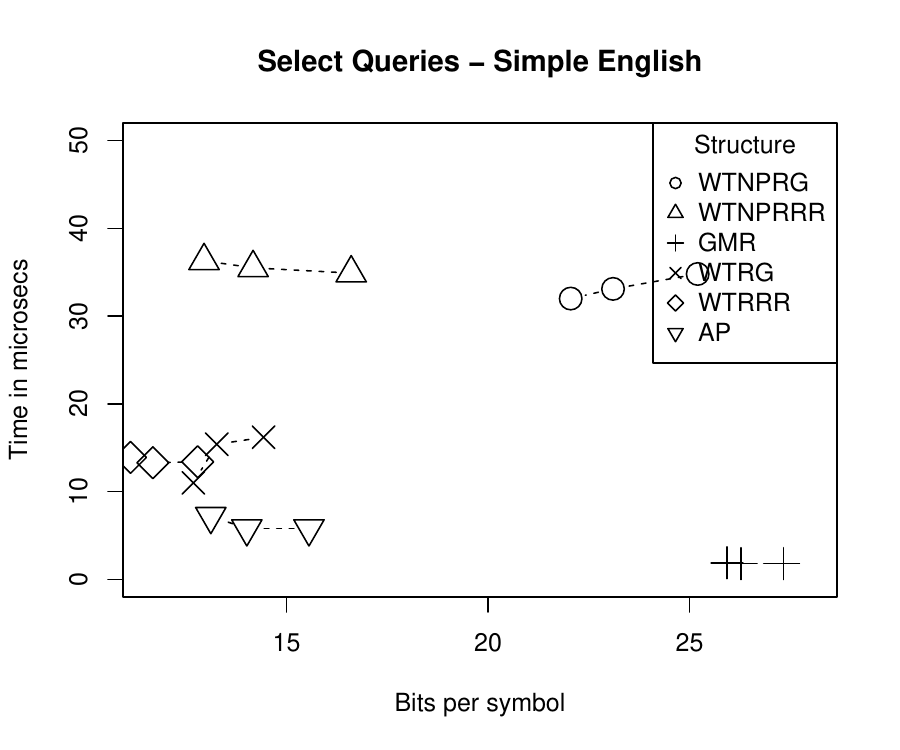}
\includegraphics[height=0.3\textheight,width=0.49\textwidth]{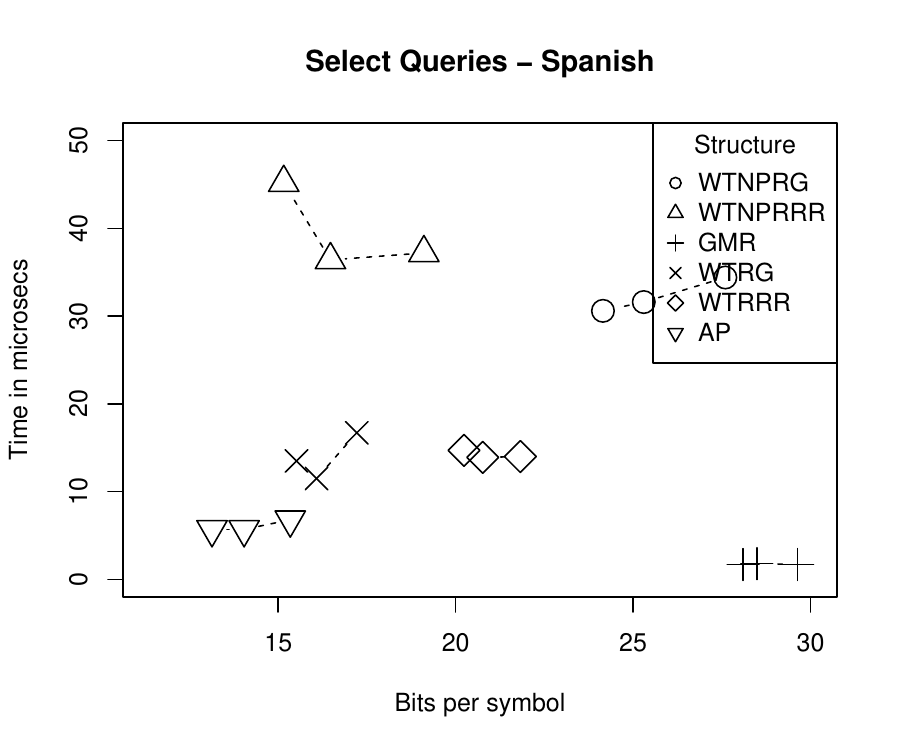}}

\centerline{%
\includegraphics[height=0.3\textheight,width=0.49\textwidth]{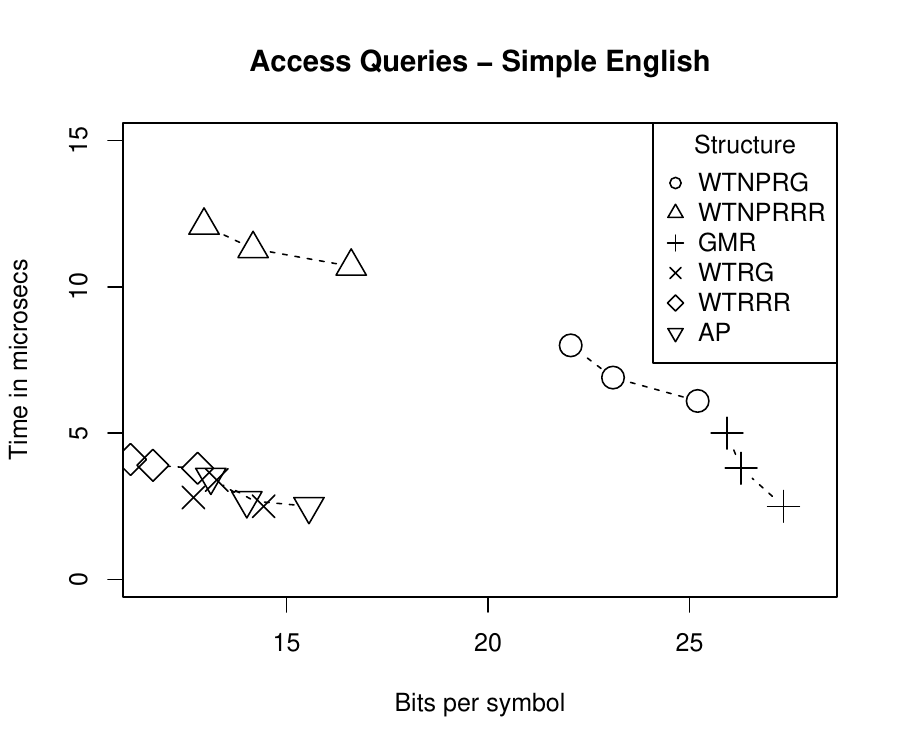}
\includegraphics[height=0.3\textheight,width=0.49\textwidth]{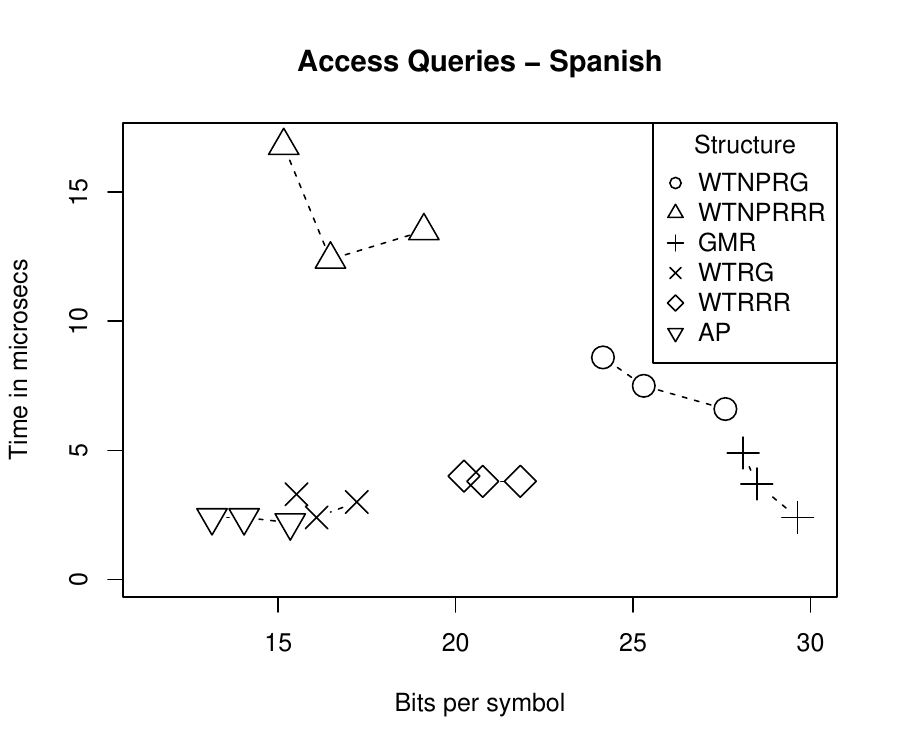}}

\caption{Time for the three operations. The $x$ axis starts at the entropy of the sequence.}
\label{fig:opers}
\end{figure}

Let us first analyze the case of Simple English, where the alphabet is
smaller. Since $\sigma$ is 1000 times smaller than $n$, the 
$\Oh{\sigma\lg n}$ terms of Huffman-shaped wavelet trees are not significant,
and as a result the variant \texttt{WTRRR} reaches the least space,
essentially $n\Ho(s)+o(n\Ho(s))$. It is followed by three variants that use
similar space: \texttt{WTRG} (which has an additional $\Oh{n}$-bit overhead),
\texttt{AP} (whose $o(n\Ho(s))$ space term is higher than that of wavelet
trees), and \texttt{WTNPRRR} (whose sublinear space term is of the form
$o(n\lg\sigma)$, that is, uncompressed). The remaining structures,
\texttt{WTNPRG} and \texttt{GMR}, are not compressed and use much more space.

In terms of time, structure \texttt{AP} is faster than all the others
except \texttt{GMR} (which in exchange uses much more space). The exception
is on \access\ queries, where as explained Huffman-shaped wavelet trees,
\texttt{WTRG} and \texttt{WTRRR}, are favored and reach the same performance
of \texttt{AP}. In general, the rule is that variants using plain bitmaps are
faster than those using FID compression, and that variants using pointers
and Huffman shape are faster than those without pointers (as the latter need
additional operations to navigate the tree). These differences are smaller on
\select\ queries, where the binary searches dominate most of the time spent.

The Spanish collection has a much larger alphabet: $\sigma$ is only 100 times
smaller than $n$. This impacts on the $\Oh{\sigma\lg n}$ bits used by the
pointer-based wavelet trees, and as a result the space of \texttt{AP},
$n\Ho(s) + o(n\Ho(s))$, is unparalleled. Variants \texttt{WTRRR} and
\texttt{WTRG} use significantly more space and are followed, far away,
by \texttt{WTNPRRR}, which has uncompressed redundancy. The
uncompressed variants \texttt{WTNPRG} and \texttt{RG} use significantly more
space. The times are basically as on Simple English.

This second collection illustrates more clearly that, for large alphabets,
our structure \texttt{AP} sharply dominates the whole space/time tradeoff.
It is only slightly slower than \texttt{GMR} in some cases, but in exchange
it uses half the space. From the wavelet trees, the most competitive alternative
is \texttt{WTRG}, but it always loses to \texttt{AP}. The situation is not too
different on smaller alphabets (as in Simple English), except that variant
\texttt{WTRRR} uses clearly less space, yet at the expense of doubling
the operation times of \texttt{AP}.

\subsection{Intersecting inverted lists}
\label{sec:exp-invl}

An interesting application of \rank/\select\ operations on large alphabets
was proposed by Clarke et al.~\cite{CCT00}, and recently implemented by
Arroyuelo et al.~\cite{AGO10} using wavelet trees. The idea is to represent
the text collections as a sequence of word tokens (as done for Simple English
and Spanish), use a compressed and \rank/\select/\access-capable sequence 
representation for them, and use those operations to emulate an inverted index 
on the collection, without spending any extra space on storing explicit 
inverted lists. 

More precisely, given a collection of $d$ documents $T_1,T_2,\ldots,T_d$, we
concatenate them in $\mathcal{C} = T_1 T_2 \ldots T_d$, and build
an auxiliary bitmap $b[1..|\mathcal{C}|]$ where we mark the
beginning of each document with a 1. We can provide access to the text of any
document in the collection via \access\ operations on sequence
$\mathcal{C}$ (and \select\ on $b$). In order to emulate the inverted list of 
a given term $w$, we just need to list all the distinct documents where $w$ 
occurs. This is achieved by iterating on procedure 
{\em nextDoc}$(\Cs,w,p)$ of Algorithm~\ref{alg:nextdoc} (called initially with 
$p=0$ and then using the last $p$ value returned).

\begin{algorithm}[htb]
\SetKwInOut{Input}{input}\SetKwInOut{Output}{output}
\Input{$\mathcal{C}, w, p$}
\Output{next document after $T_p$ that contains $w$}
\SetKwFunction{nextDoc}{nextDoc}

$pos \leftarrow b.\select_1(p+1)$\;
$cnt \leftarrow \Cs.\rank_w(pos-1)$\;
\Return{$b.\rank_1(\Cs.\select_w(cnt+1))$}
\caption{Function {\em nextDoc}$(\Cs, w, p)$, retrieves the next document 
after $p$ containing $w$. The $x$ axis starts at the entropy of the sequence.}
\label{alg:nextdoc}
\end{algorithm}

Algorithm \ref{alg:nextdoc} also allows one to test whether a given document 
contains a term or not ($p$ contains $w$ iff $p=\mathit{nextDoc}(\Cs,w,p-1)$). 
Using this primitive we implemented Algorithm~\ref{alg:intersect}, which
intersects several lists (i.e., returns the documents where all the given
terms appear) based on the algorithm by Demaine et al.~\cite{DLM00}. We
tested this algorithm for both Simple English and Spanish collections,
searching for phrases extracted at random from the collection. We considered 
phrases of lengths 2 to 16. We averaged the results over $1{,}000$ queries. As all 
the results were quite similar, we only show the cases of 2 and 6 words.
We tested the same structures as in Section~\ref{sec:exp-seqs}.

\begin{algorithm}[htb]
\SetKwInOut{Input}{input}\SetKwInOut{Output}{output}
\SetKwFunction{nextDoc}{\textit{nextDoc}}
\Input{$\mathcal{C}, W=w_1,w_2,\ldots,w_k$}
\Output{documents that contain $w_1,\ldots,w_k$}

sort $W$ by increasing number of occurrences in the collection\;
$res \leftarrow \emptyset$\;
$p \leftarrow \nextDoc(\mathcal{C},w_1,0)$\;
\While{$p$ is valid}{
  \If{$w_2,\ldots,w_k$ are contained in $p$ (i.e., $p=\nextDoc(\Cs,w_j,p-1)$
			for $2 \le j \le k$)}{
    Add $p$ to $res$\;
    $p \leftarrow \nextDoc(\mathcal{C},w_1,p)$
    }
  \Else {
    Let $w_j$ be the first word not contained in $p$\;
    $p \leftarrow \nextDoc(\mathcal{C},w_1,\nextDoc(\mathcal{C},w_j,p-1))$
  }  
}
\Return{$res$}
\caption{Retrieving the documents where all $w_1,\ldots,w_k$ appear.}
\label{alg:intersect}
\end{algorithm}

Figure \ref{fig:inters} shows the results obtained by the different 
structures. For space, of course, the results are as before: \texttt{AP}
is the best on Spanish and is outperformed by \texttt{WTRRR} on Simple
English. With respect to time, we observe that Huffman-shaped wavelet trees
are favored compared to the random \rank\ and \select\ queries of
Section~\ref{sec:exp-seqs}. The reason is that the queries in this 
application, at least in the way we have generated them, do not distribute
uniformly at random: the symbols for \rank\ and \select\ are chosen
according to their probability in the text, which favors Huffman-shaped
trees. As a result, structures \texttt{WTRG} perform similarly to 
\texttt{AP} in time, whereas \texttt{WTRRR} is less than twice as slow.

\begin{figure}[tb]
\centerline{%
\includegraphics[width=0.49\textwidth]{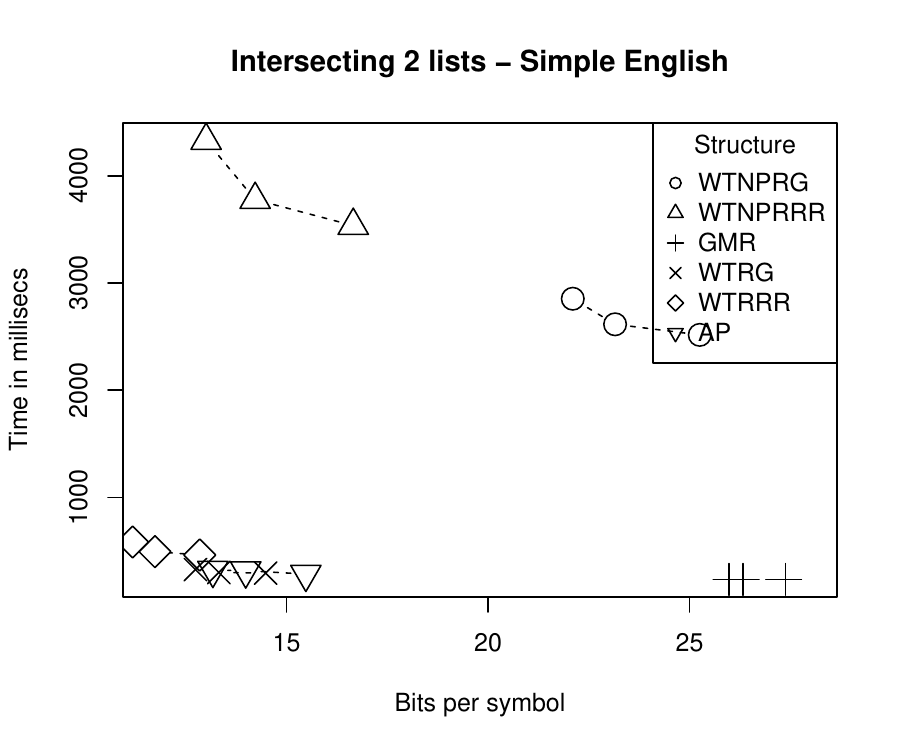}
\includegraphics[width=0.49\textwidth]{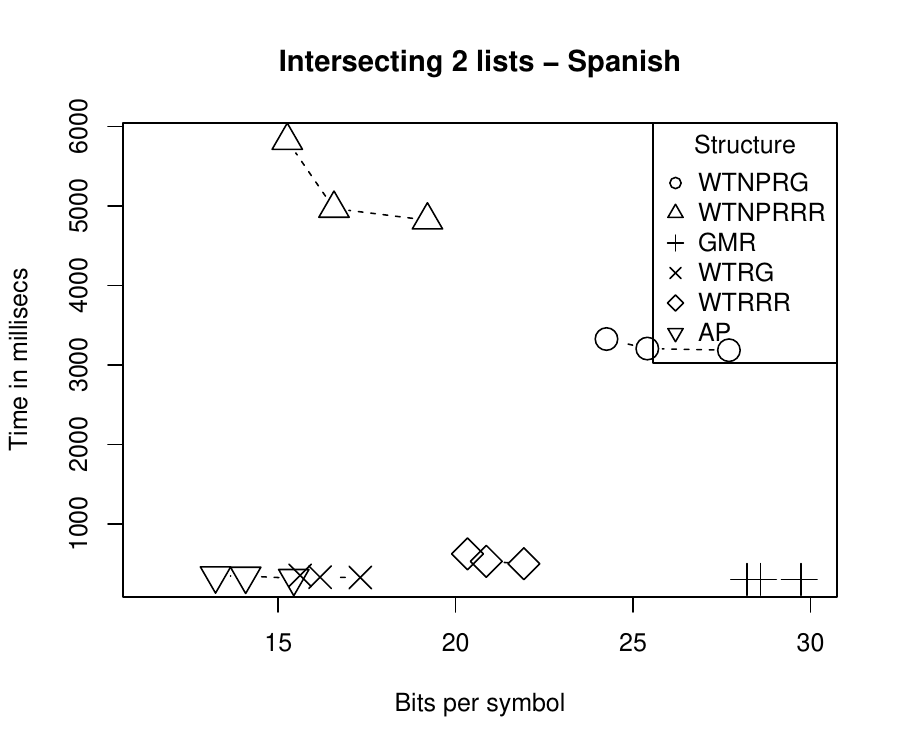}}

\centerline{%
\includegraphics[width=0.49\textwidth]{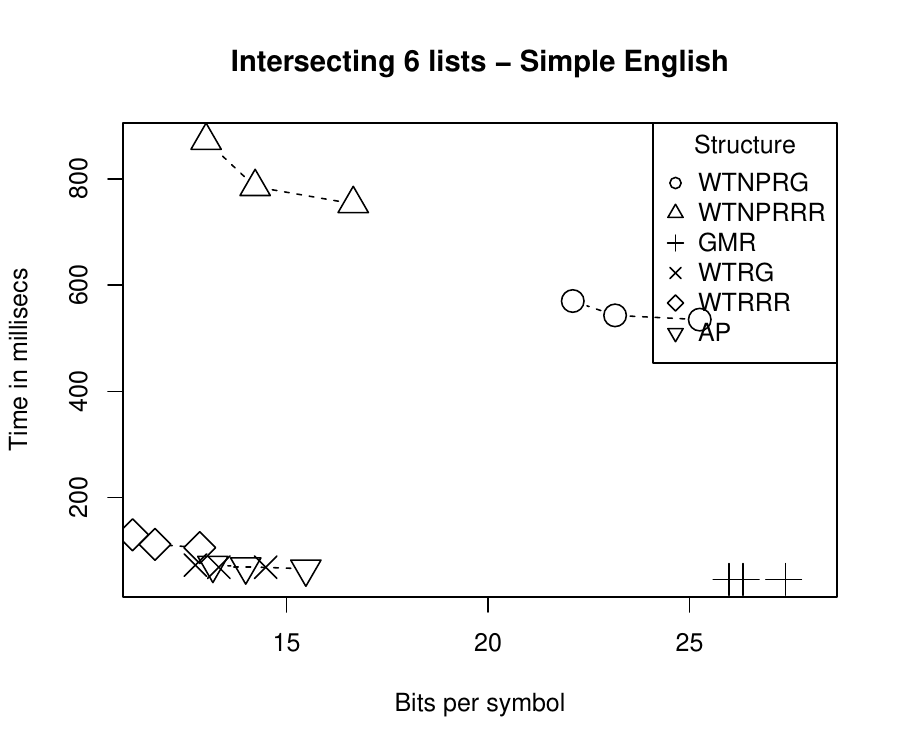}
\includegraphics[width=0.49\textwidth]{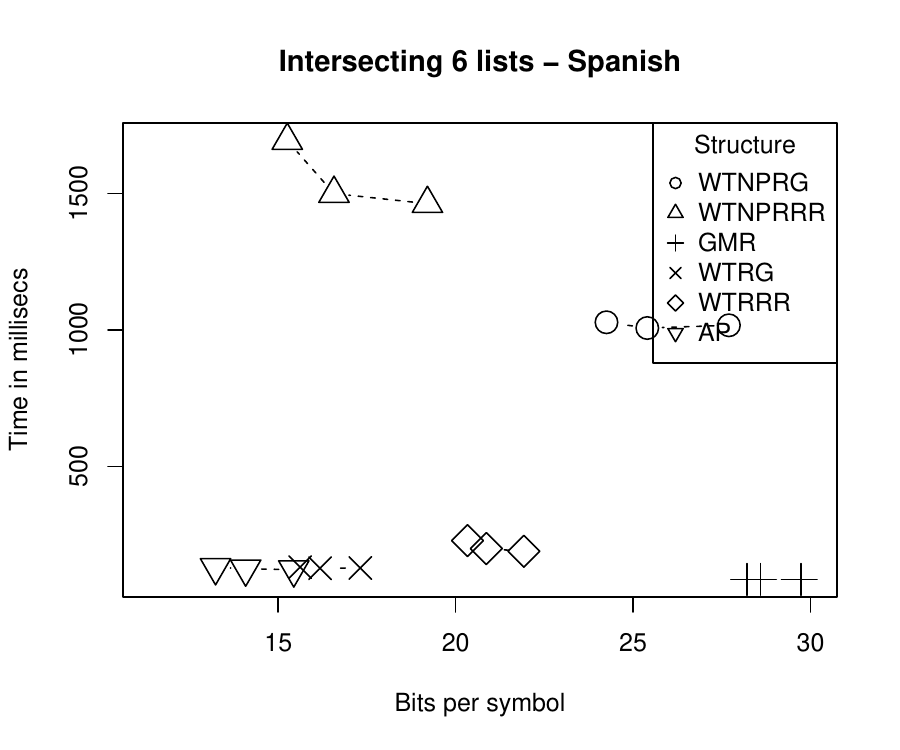}}
\caption{Results for intersection queries. The $x$ axis starts at the entropy of the sequence.}
\label{fig:inters}
\end{figure}

\subsection{Self-indexes}
\label{sec:exp-ssa}

A second application of the sequence operations on large alphabets was
explored by Fari\~na et al.~\cite{FBNCPR11}. The idea is to take 
a self-index \cite{NM07} designed for text composed of characters, and apply 
it to a word-tokenized text, in order to carry out word-level searches on
natural language texts. This requires less space and time than the
character-based indexes and competes successfully with word-addressing
inverted indexes. One of the variants they explore is to build an FM-index 
\cite{FM05,FMMN07} on words \cite{CN08}. The FM-index represents the
Burrows-Wheeler transform (BWT) \cite{BW94} $s^\textrm{bwt}$ of $s$.
Using \rank\ and \access\ operations on $s^\textrm{bwt}$ the FM-index can, 
among other operations, {\em count} the number of occurrences of a pattern 
$p[1..k]$ (in our case, a phrase of $k$ words) in $s[1..n]$. This requires
$O(k)$ applications of \rank\ and \access\ on $s^\textrm{bwt}$. A self-index 
is also able to retrieve any passage of the original sequence $s$.

We implemented the word-based FM-index with the same structures measured so 
far, plus a new variant called \verb|APRRR|. This is a version of \verb|AP| 
where the bitmaps of the wavelet tree of $t$ are represented using FIDs
\cite{RRR02}. The reason is that it was proved \cite{MN07impl} that the
wavelet tree of $s^\textrm{bwt}$, if the bitmaps are represented using
Raman et al.'s FID \cite{RRR02}, achieves space $n\Hk(s)+o(n\lg\sigma)$.
Since the wavelet tree $t$ of sub-alphabets of $s^\textrm{bwt}$ is a 
coarsened version of that of $s^\textrm{bwt}$, we expect it to take advantage
of Raman et al's representation.

We extracted phrases at random text positions, of lengths 2 to 16, and 
counted their number of occurrences using the FM-index. We averaged the
results over $100{,}000$ searches. As the results
are similar for all lengths, we show the results for lengths 2 and 8.
Figure \ref{fig:count} shows the time/space tradeoff obtained. 

\begin{figure}[tb]
\centerline{%
\includegraphics[width=0.45\textwidth]{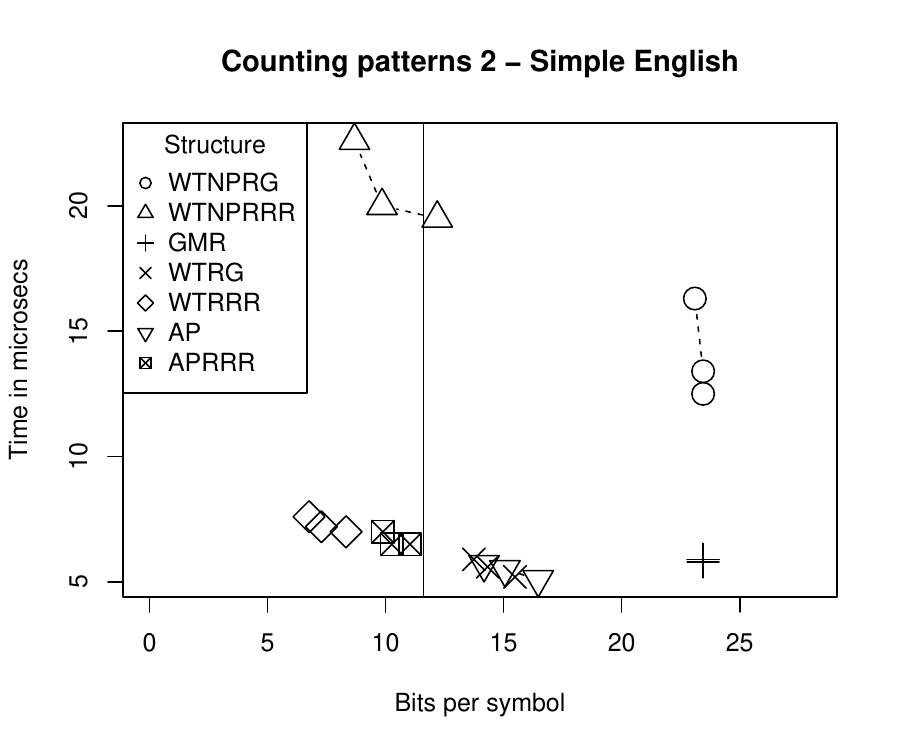}
\includegraphics[width=0.45\textwidth]{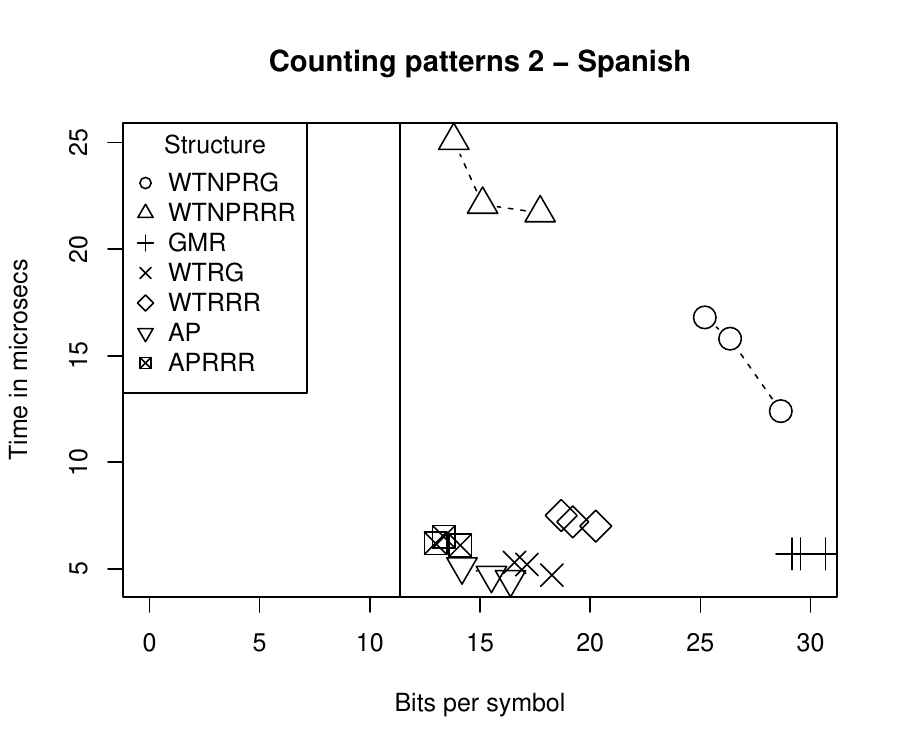}}

\centerline{%
\includegraphics[width=0.45\textwidth]{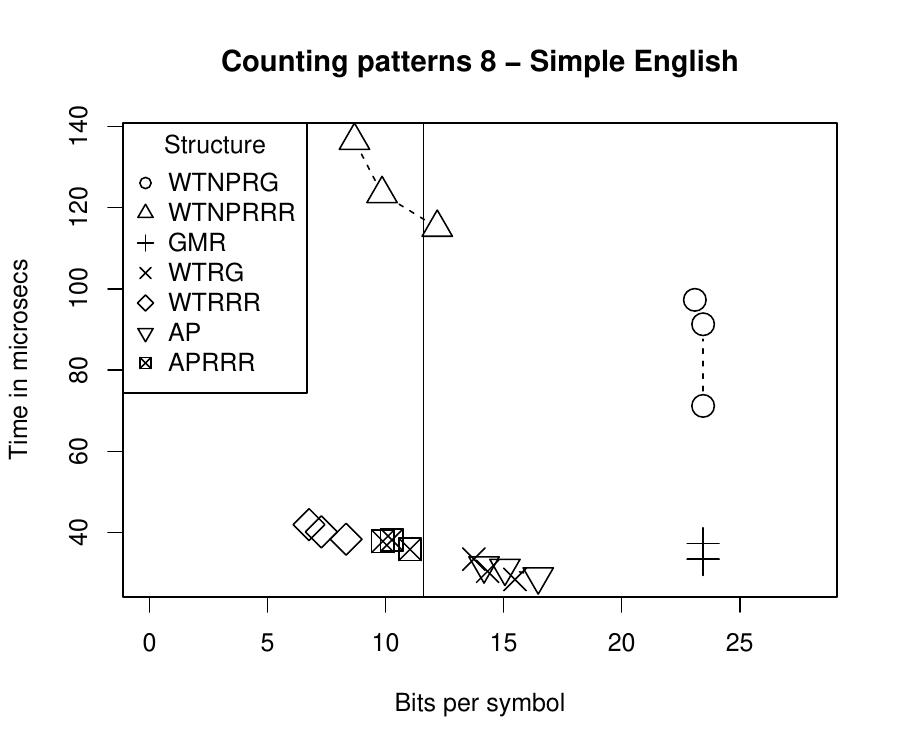}
\includegraphics[width=0.45\textwidth]{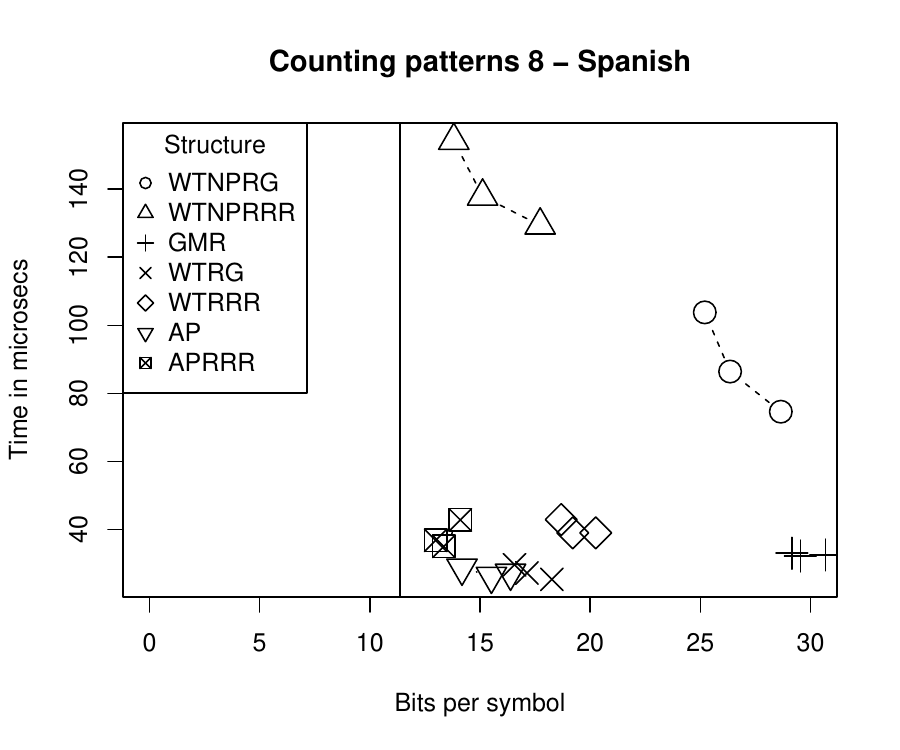}}
\caption{Time for counting queries on word-based FM-indexes. The vertical line marks the zero-order entropy of the sequences; remember that some schemes achieve high-order entropy spaces.}
\label{fig:count}
\end{figure}

Confirming the theoretical results \cite{MN07impl}, the versions using
compressed bitmaps require much less space than the other alternatives.
In particular, \texttt{APRRR} uses much less space than \texttt{AP}, 
especially on Simple English. In this text the least space is reached by
\texttt{WTRRR}. On Spanish, instead, the $\Oh{\sigma\lg n}$ bits of 
Huffman-shaped wavelet trees become
relevant and the least space is achieved by \texttt{APRRR}, closely followed
by \texttt{AP} and \texttt{WTNPRRR}. The space/time tradeoff is dominated
by \texttt{APRRR} and \texttt{AP}, the two variants of our structure.

\subsection{Navigating graphs}
\label{sec:exp-graph}

Finally, our last application scenario is the compact representation of
graphs. Let $G=(V,E)$ be a directed graph. If we concatenate the adjacency
lists of the nodes, the result is a sequence $s[1..|E|]$ over an alphabet of
size $|V|$. If we add a bitmap $b[1..|E|]$ that marks with a 1 the beginning
of the lists, it is very easy to retrieve the adjacency list of any node $v
\in V$, that is, its {\em neighbors}, with one \select\ operation  on $b$
followed by one \access\ operation on $s$ per neighbor
retrieved.\footnote{Note that this works well as long as each node points to
at least one node. We solve this problem by keeping an additional bitmap
marking the nodes whose list is not empty.}

It is not hard to reach this space with a classical graph representation.
However, classical representations do not allow one to retrieve efficiently the
{\em reverse neighbors} of $v$, that is, the nodes that point to it. The
classical solution is to double the space to represent the transposed graph.
Our sequence representation, however, allows us to retrieve the reverse
neighbors using \select\ operations on $s$, much as
Algorithm~\ref{alg:nextdoc} retrieves the documents where a term $w$ appears:
our ``documents'' are the adjacency lists of the nodes, and the document
identifier is the node $v \in V$ that points to the desired node. Similarly,
it is possible to determine whether a given node $v$ points to a given node
$v'$, which is not an easy operation with classical adjacency lists. This
idea has not only been used in this simple form \cite{CN08}, but also in 
more sophisticated scenarios where it was combined with grammar compression
of the adjacency lists, or with other transformations, to compress Web graphs 
and social networks \cite{CNtweb10,CNlncs10,HN11}.

For this experiment we used two crawls obtained from the well-known 
{\em WebGraph} project\footnote{{\tt http://law.dsi.unimi.it}}.
The main characteristics of these crawls are shown in Table \ref{tab:crawls}. 
Note that the alphabets are comparatively much larger than on documents,
just around 22--26 times smaller than the sequence length.

Figure \ref{fig:graphs} shows the results obtained. The nodes are sorted
alphabetically by URL. A well-known property of Web graphs \cite{BV04} is that 
nodes tend to point to other nodes of the same domain. This property turns into
substrings of nearby symbols in the sequence, and this turns into runs of
0s or 1s in the bitmaps of the wavelet trees. This makes variants like
\texttt{WTNPRRR} very competitive in space, whereas \texttt{APRRR} does not
benefit so much. The reason is that the partitioning into
classes reorders the symbols, and the property is lost. Note that variant
{\tt WTRRR} does not perform well in space, since the number of nodes is too
large for a pointer-based tree to be advantageous. For the same reason, even
{\tt WTRG} uses more space than {\tt GMR}\footnote{Note that \texttt{WTRRR} is
almost 50\% larger than \texttt{WTRG}. This is because the former is a more
complex structure and requires a larger (constant) number of pointers to be
represented. Multiplying by the $\sigma$ nodes of the Huffman-shaped wavelet
tree makes a significant difference when the alphabet is so large.}.
Overall, we note that our variants largely dominate the space/time tradeoff,
except that \texttt{WTNPRRR} uses less space (but much more time).

\begin{table}
\begin{center}
\begin{tabular}{l|c|c|l}
{ Name} & {Nodes } & {Edges } & ~Plain adj. list (bits per edge)\\
\hline
EU (EU-2005) & $862{,}664$ & $19{,}235{,}140$ & $~20.81$ \\
In (Indochina-2002) & $7{,}414{,}866$ & $194{,}109{,}311$ & $~23.73$ \\
\end{tabular}
\caption{Description of the Web crawls considered.}
\label{tab:crawls}
\end{center}
\end{table}

\begin{figure}[tb]
\centerline{%
\includegraphics[width=0.49\textwidth]{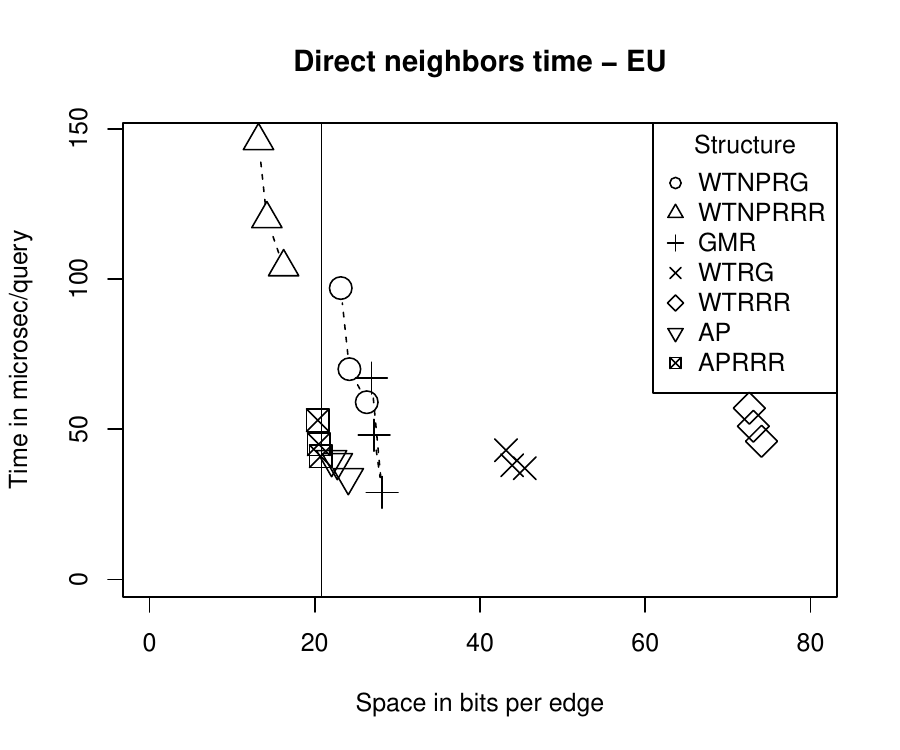}
\includegraphics[width=0.49\textwidth]{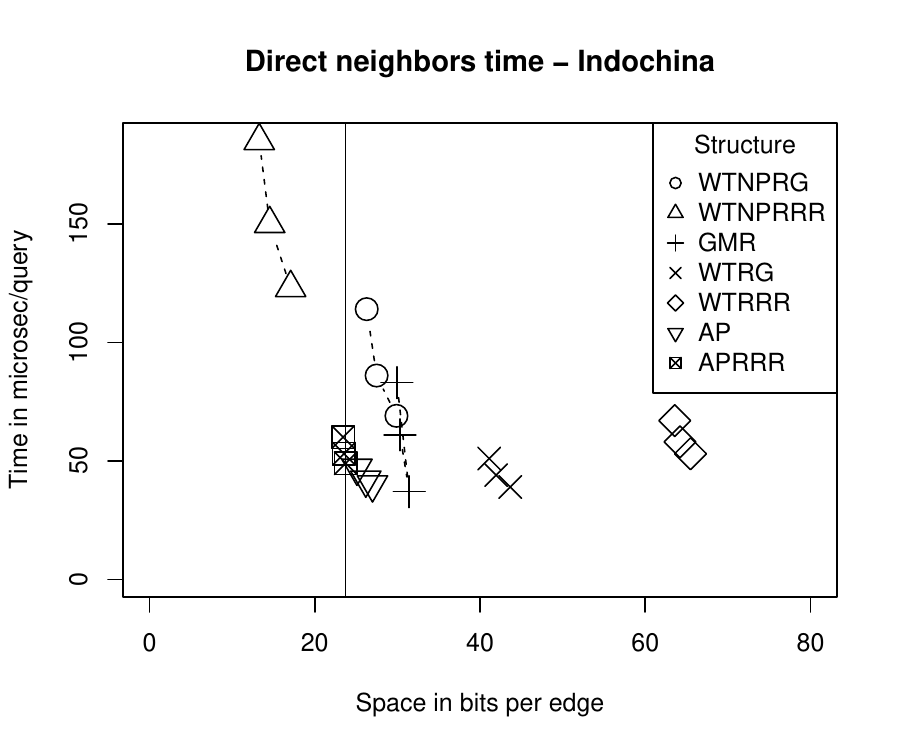}}

\centerline{%
\includegraphics[width=0.49\textwidth]{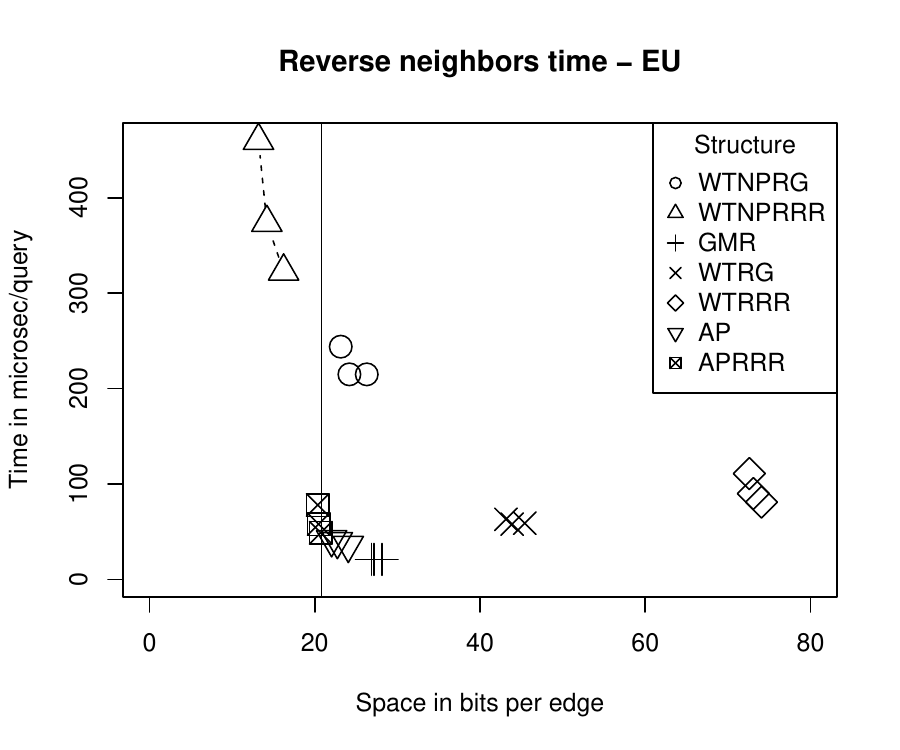}
\includegraphics[width=0.49\textwidth]{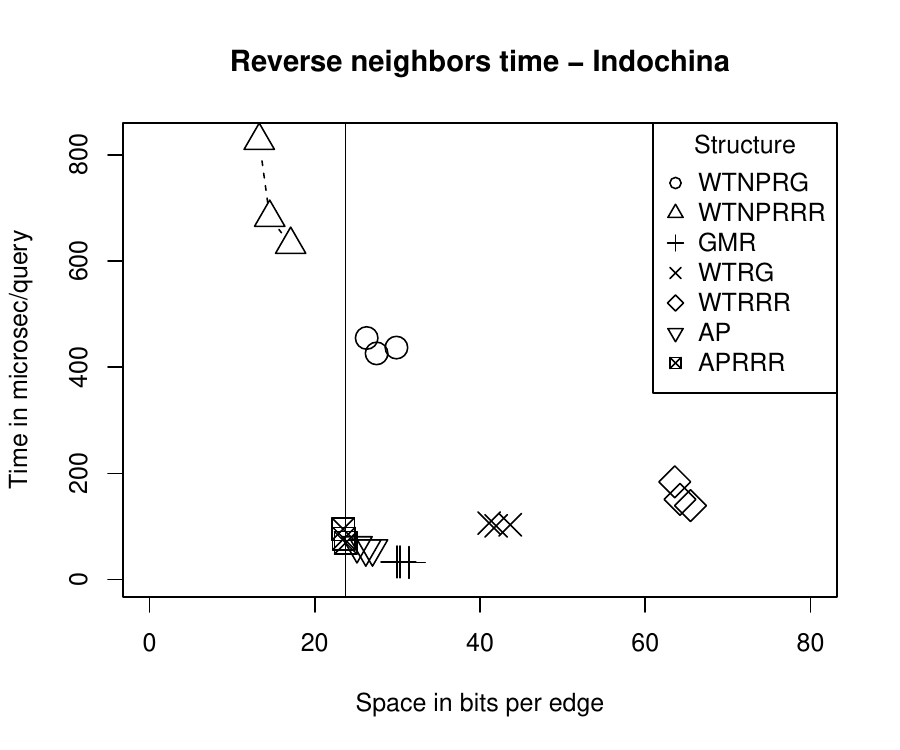}}
\caption{Performance on Web graphs, to retrieve direct and reverse neighbors. The vertical line marks the bits per edge required by a plain adjacency list representation.}
\label{fig:graphs}
\end{figure}

\section{Conclusions and future work}

We have presented the first zero-order compressed representation of sequences
supporting queries \access, \rank, and \select\ in loglogarithmic time, so that 
the
redundancy of the compressed representation is also compressed. That is, our
space for sequence $s[1..n]$ over alphabet $[1..\sigma]$ is $n\Ho(s) +
o(n)(\Ho(s)+1)$ instead of the usual $n\Ho(s)+o(n\lg\sigma)$ bits.
This is very important in many practical applications where the data is so
highly compressible that a redundancy of $o(n\lg\sigma)$ bits would dominate 
the overall space. While there exist representations using even $n\Ho(s)+o(n)$ 
bits, ours is the first one supporting the operations in time
$\Oh{\lg\lg\sigma}$ while breaking the $o(n\lg\sigma)$ redundancy barrier.
Moreover, our time complexities are adaptive to the compressibility of the
sequence, reaching average times $\Oh{\lg \Ho(s)}$ under reasonable
assumptions. We have given various byproducts of the result, where the 
compressed-redundancy property carries over representations of text indexes,
permutations, functions, binary relations, and so on. It is likely that still 
other data structures can benefit from our compressed-redundancy representation.
Finally, we have shown experimentally that our representation is highly 
practical, on large alphabets, both in synthetic and real-life application 
scenarios.

\medskip

On the other hand, various interesting challenges on sequence representations
remain open:
\begin{enumerate}
\item Use $n\Hk(s)+o(n)(\Hk(s)+1)$ bits of space, rather than
$n\Hk(s)+o(n\lg \sigma)$ \cite{BHMR07,GOR10} or our
$n\Ho(s)+o(n)(\Ho(s)+1)$ bits, while still supporting the queries 
$\access$, $\rank$, and $\select$ efficiently. 
\item Remove the $o(n\Ho(s))$ term from the redundancy while retaining
loglogarithmic query times. Golynski et al.~\cite{GRR08} have achieved 
$n\Ho(s)+o(n)$ bits of space, but the time complexities are exponentially
higher on large alphabets, $\Oh{1+\frac{\lg\sigma}{\lg\lg n}}$.
\item Lower the $o(n)$ redundancy term, which may be not negligible on 
highly compressible sequences.
Our $o(n)$ redundancy is indeed $\Oh{\frac{n}{\lg\lg\lg n}}$.
That of Golynski et al.~\cite{GRR08}, $o\left(\frac{n\lg\sigma}{\lg n}\right)$,
is more attractive, at least for small alphabets. Moreover, for the binary 
case, P\u{a}tra\c{s}cu~\cite{Pat08} obtained $\Oh{\frac{n}{\lg^c n}}$ for any 
constant $c$, and this is likely to carry over multiary wavelet trees.
\end{enumerate}

After the publication of the conference version of this paper, Belazzougui
and Navarro \cite{BNesa11} achieved a different tradeoff for one of our
byproducts (Theorem~\ref{thm:self}). By spending $\Oh{n}$ further bits, they
completely removed the terms dependent on $\sigma$ in all time complexities,
achieving $\Oh{m}$, $\Oh{\lg n}$ and $\Oh{r-l+\lg n}$ times for counting,
locating and extracting, respectively. Their technique is based in monotone
minimum perfect hash functions (mmphfs), which can also be used to improve
some of our results on permutations, for example obtaining constant time for
query $\pi(i)$ in Theorem~\ref{thm:runs} and thus improving all the derived
results\footnote{Djamal Belazzougui, personal communication.}.

This is just one example of how lively current research is on this 
fundamental problem. Another
example is the large amount of recent work attempting to close the gap
between lower and upper bounds when taking into account compression, time
and redundancy
\cite{GGGRR07,Gol07,GRR08,Pat08,GORR09,Pat09,Gol09,GOR10}. Very recently,
Belazzougui and Navarro \cite{BNlower11} proved a lower bound of
$\Omega(\lg\frac{\lg\sigma}{\lg w})$ for operation $\rank$ on a RAM machine
of word size $w$, which holds for any space of the form $\Oh{nw^{\Oh{1}}}$, and
achieved this time within $\Oh{n\lg\sigma}$ bits of space. Then, making use
of the results we present in this paper, they reduced the space to
$n\Ho(s)+o(n\Ho(s))+o(n)$ bits. This is just one example of how our technique
can be easily used to move from linear-space to compressed-redundancy-space
sequence representations.

\paragraph*{Acknowledgments.}

We thank Djamal Belazzougui for helpful comments on a draft of this paper,
and Meg Gagie for righting our grammar.

\bibliographystyle{plain}
\bibliography{paper}

\end{document}